\newtheorem{prop}{Proposition}
\newcommand{\R}{\mathbb{R}}
\newcommand{\cd}{\xrightarrow[]{d}}
\title{Can the potential benefit of individualizing treatment be assessed using trial summary statistics alone?}
\author[1]{Nina Galanter}
\affil[1]{\footnotesize   Department of Biostatistics, University of Washington, Seattle, USA}
\author[1]{Marco Carone}
\author[2]{Ronald C. Kessler}
\affil[2]{\footnotesize Department of Health Care Policy, Harvard Medical School, Boston, USA} 
\author[3]{Alex Luedtke}
\affil[3]{\footnotesize Department of Statistics, University of Washington, Seattle, USA} 
\begin{document}

\maketitle

\begin{abstract}
   Individualizing treatment assignment can improve outcomes for diseases with patient-to-patient variability in comparative treatment effects. When a clinical trial demonstrates that some patients improve on treatment while others do not, it is tempting to assume that treatment effect heterogeneity exists. However, if variability in response is mainly driven by factors other than treatment, investigating the extent to which covariate data can predict differential treatment response is a potential waste of resources. Motivated by recent meta-analyses assessing the potential of individualizing treatment for major depressive disorder using only summary statistics, we provide a method that uses summary statistics widely available in published clinical trial results to bound the benefit of optimally assigning treatment to each patient. We also offer alternate bounds for settings in which trial results are stratified by another covariate. We demonstrate our approach using summary statistics from a depression treatment trial. Our methods are implemented in the \texttt{rct2otrbounds} R package, which is available at \texttt{https://github.com/ngalanter/rct2otrbounds}.
\end{abstract}
\vspace{24pt}

\section{Introduction}

Tailoring treatment based on patient characteristics is a promising way to improve outcomes in major depressive disorder (MDD). Many alternative treatments exist for MDD, each with evidence of effectiveness for some but not all patients, but none is reliably superior to the others. Though numerous significant predictors of differential response across treatments have been reported in the literature, none of them are overwhelmingly powerful \cite{simon2010personalized,webb2019personalized}. Trial and error, somewhat guided by basic insights about differential effectiveness related to patient characteristics, is the core approach to treatment selection. Most patients will need to try multiple treatments to achieve stable remission. 

This pattern has led clinical researchers to hypothesize that characteristics of the patient and their symptoms define some as-yet-unknown MDD subtypes that respond differently to available treatments and lead to substantial heterogeneity in the comparative effectiveness of treatment alternatives. If true, discovering these characteristics would lead to improved treatment outcomes by allowing the right treatments to be assigned to patients quickly and without the current trial and error process. Several studies have developed individualized treatment rules (ITRs) for MDD or MDD subtypes based on the discovery of interactions between baseline patient characteristics and treatment types in predicting outcomes. Most of these ITRs are weak, however, either because the samples were too small to support stable estimation or because limited predictors were used in developing the ITRs \cite{zhao2012estimating, kessler2021individualized, derubeis2014personalized}. These repeated failures lead to the question: is it possible to develop ITRs that are substantially beneficial in guiding MDD treatment selection? If strongly beneficial ITRs exist, the right predictors simply remain to be found. However, if strongly beneficial ITRs do not exist, further efforts aimed at finding the right predictors would be a futile enterprise. Of course, the existence of variability in MDD outcomes is undeniable, but that does not mean that systematic treatment effect heterogeneity exists. The outcome variation could be due to factors such as measurement error, unrelated changes in patient disease trajectory, and
variability between patients unrelated to treatment \cite{winkelbeiner2019evaluation}. 

 Several groups have proposed that investigating the ratio of treatment arm-specific outcome variances across all arms in a trial determines the potential for developing ITRs. The claim is that if the variance ratio is close to one, there is little evidence for meaningful benefits from individualized treatment \cite{maslej2020individual,munkholm2020individual, ploderl2019chances,volkmann2020treatment}. However, this recommendation is incorrect, as a given variance ratio can be consistent with many potential levels of benefit from individualized treatment. Volkmann \cite{volkmann2020relationship} recently derived a bound on the treatment effect heterogeneity based on the variance ratio; this bound demonstrates the range of heterogeneity values possible for any variance ratio.  However, the reported bound does not provide insight into the relationship between treatment effect heterogeneity and the benefit of individualizing treatment. The bound is also not tight for outcome measures restricted within a given range. Given this, it would be advantageous to have a measure that can be easily calculated from study summary statistics and provides guidance on the potential aggregate gain from individualizing treatment. 

In this paper, we directly link trial summary statistics to the benefit of individualized treatment by providing upper and lower bounds
on the average benefit gained from using an optimal treatment rule as opposed to the best population-level (i.e., unindividualized) treatment. The only summary statistics needed to compute these bounds are the means and variances in each treatment arm, all of which can be estimated from a meta-analysis of studies on a given treatment or class of treatments \cite{maslej2020individual,munkholm2020individual,winkelbeiner2019evaluation}. We provide a general bound and refined bounds for binary and otherwise bounded discrete outcomes. We argue that the latter bounds are as tight as possible given these summary statistics, in the sense that no better bounds that only use these summary statistics can be found. With these bounds, investigators can provide a range of possible values for the gains realizable by assigning treatment according to an optimal rule.  
We also provide improved bounds for use when an investigator has access to arm-specific means and variance estimates stratified by a covariate, such as sex or disease severity.

\section{Context and use cases for our results}

It is noteworthy that the approach presented here derives bounds on the benefit of individualizing treatment based on summary statistics that can be used whether or not individual-level trial data are accessible by the investigator. When individual-level data are not available, the derived bounds provide insight as to whether there is potential benefit to getting access to such data, either from existing trials that collected data on covariates that may modify the treatment effect or by conducting a new study. If, instead, individual-level data are available, then the bounds presented here can aid in deciding whether it is worth putting in the effort needed to estimate an optimal treatment rule based on the available covariates.

If our bounds on the benefit of individualizing treatment suggest that it could be worthwhile to estimate an optimal treatment rule and evaluate its benefit, there are many available methods. For example, the \texttt{DynTxRegime} R package \cite{dyntxregime} implements methods for individualized treatment regimes including Q-learning \cite{murphy2005generalization}, interactive Q-learning \cite{dyntxregime}, outcome weighted learning \cite{zhao2012estimating}, and residual weighted learning \cite{zhou2017residual}. The \texttt{DTRreg} R package \cite{dtrreg} implements the weighted least squares \cite{chakraborty2013statistical,wallace2015doubly} and G-estimation \cite{chakraborty2013statistical,robins2004optimal} methods. The \texttt{sg} R package \cite{vanderweele2019selecting} provides a means to estimate individualized treatment rules via super learning \cite{luedtke2016super} and evaluate their benefit via targeted minimum loss-based estimation \cite{van2015targeted}.

Figure~\ref{fig:flowchart} presents a summary of how the results in this paper can be used.

\begin{figure}[ht]
\centering
\includegraphics[scale = 0.7]{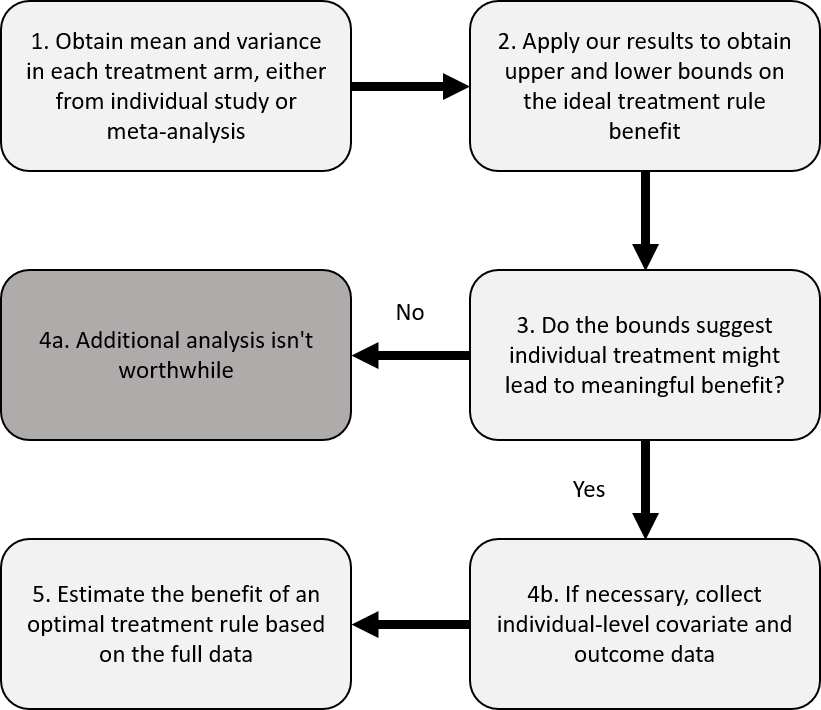}
\caption{Flowchart of the process of applying our bounds on the treatment rule benefit.}
\label{fig:flowchart}
\end{figure}

\section{Results}\label{bounds_section}

\subsection{Setting and notation}

We begin by establishing our setting of interest and notation. Throughout, we consider the case where there are two treatments, treatment 1 and treatment 0, which we call ``treatment" and ``control", respectively. Naturally, all of our results are also applicable when both treatment 1 and treatment 0 are active. For a given individual $i$, we denote the outcomes under treatment and control as $Y^1_{i}$ and $Y^0_{i}$, respectively. We will use $Y^1$ and $Y^0$ to refer to the potential outcomes of a generic individual randomly drawn from the target population. Throughout, we assume that larger outcome values are preferable to smaller values, so that the goal of treatment is to increase the outcome. If smaller outcome values are preferred, the outcomes should be transformed appropriately. For example, if the outcome is an indicator of having an adverse outcome, the indicator of not having an adverse outcome could be considered instead; our bounds would then remain the same. We also denote the individual-level treatment effect variable by $\Delta=Y^1-Y^0$, which we refer to as the treatment effect for brevity. Finally, we define the variance ratio described earlier as $\nu^2=\mathrm{var}(Y^1)/\mathrm{var}(Y^0)$. 

In Sections \ref{bounds_section} and \ref{extensions_section}, all of our results involve the true, population-level average treatment effects and variances of the treatment and control potential outcomes. Section \ref{estimation_section} describes how to estimate our bounds and create corresponding confidence intervals based on the summary statistics available from a completed trial. Our methods are implemented in the rct2otrbounds R package, which is available at \texttt{https://github.com/ngalanter/rct2otrbounds}.

\subsection{Bounds on treatment effect heterogeneity}

The treatment effect heterogeneity,  $\mathrm{var}(\Delta)$, can be bounded based only on knowledge of the variances in the two arms, namely $\mathrm{var}(Y^0)$ and $\mathrm{var}(Y^1)$. Recalling that $\nu^2=\mathrm{var}(Y^1)/\mathrm{var}(Y^0)$, the bound writes as
\begin{equation}\label{teh_bound}
 \mathrm{var}(Y^0)(\nu-1)^2\leq \mathrm{var}(\Delta)\leq \mathrm{var}(Y^0) (\nu+1)^2.
\end{equation} This two-sided bound on the treatment effect heterogeneity was derived by \cite{volkmann2020relationship}; we provide an alternative derivation in Appendix  \ref{sup_var_bound}.

The lower bound attains a minimum value of zero when the outcome variances are equal across trial arms, coinciding with the intuition that such a scenario is consistent with little or no treatment effect heterogeneity. However, even when these variances are equal, the upper bound on $\mathrm{var}(\Delta)$ is four times the outcome variance in the control arm. For $\nu$ close to zero, both the upper and lower bounds approach a minimum of $\mathrm{var}(Y^0)$. In general, the difference between the bounds is $4\nu \mathrm{var}(Y^0)$, and thus, as $\nu$ becomes larger, the range of heterogeneity values consistent with the bounds increases.

The true treatment effect heterogeneity attains the lower bound when the potential outcomes are perfectly correlated between trial arms, meaning that patients who improve on treatment will also improve on control and vice versa. The true treatment effect heterogeneity attains the upper bound when the potential outcomes are instead perfectly negatively correlated, meaning that patients who improve on treatment will worsen on control and those that worsen on control will improve on treatment. Thus, for any outcome measure for which perfect negative and positive correlations are theoretically possible, no uniformly better bound exists that uses only summary statistics.

When the outcomes are bounded within a range $[m,M]$, as is the case with assessment instruments, we can construct a tighter bound using a result from H\"{o}ssjer and Sj\"{o}lander \cite{hossjer2022sharp}. Suppose that, in addition to the arm-specific variances, the arm-specific means $E(Y^0)$ and $E(Y^1)$ are available. In this case, we obtain the bound
\begin{align}\label{teh_bound_bounded}
\mathrm{var}(Y^0)+\mathrm{var}(Y^1)-s^-\leq \mathrm{var}(\Delta)\leq \mathrm{var}(Y^0)+\mathrm{var}(Y^1)+s^+\ ,
\end{align}where we define $s^-=2\,\mathrm{min}\left[\sqrt{\mathrm{var}(Y^0)\mathrm{var}(Y^1)},\{M-E(Y^0)\}\{E(Y^1)-m\},\{E(Y^0)-m\}\{M-E(Y^1)\}\right]$ and $s^+=2\,\mathrm{min}\left[\sqrt{\mathrm{var}(Y^0)\mathrm{var}(Y^1)},\{E(Y^0)-m\}\{E(Y^1)-m\},\{M-E(Y^0)\}\{M-E(Y^1)\}\right]$.
In practice, this bound is more likely to result in a meaningful improvement when the outcome standard deviation in each arm is large relative to the possible range of the data. In the special case in which the outcome is binary, simpler expressions are available for $s^-$ and $s^+$, namely $s^-=2\mathrm{min}\{E(Y^1),E(Y^0)\}+2E(Y^1)E(Y^0)$ and $s^+=2\mathrm{min}\{0,E(Y^1)+E(Y^0)-1\}+2E(Y^1)E(Y^0)$.

\subsection{Main Result: Bounds on optimal treatment rule benefit}\label{sec:bdsOptRuleBenefit}

The next result we present can be computed when there is knowledge of the arm-specific mean outcomes and variances, namely $E(Y^0)$, $E(Y^1)$, ${\rm var}(Y^0)$, and ${\rm var}(Y^1)$. This result provides tight upper and lower bounds on the best possible benefit of individualizing treatment based on covariates. To do this, we consider what we call an ideal treatment rule, which assigns treatment if $Y^1>Y^0$ and control if instead $Y^0>Y^1$.  If $Y^0=Y^1$, an arbitrary choice of treatment (e.g., always assign treatment) can be made. Since an ideal rule perfectly discriminates the preferred treatment for each individual, the mean outcome under an ideal rule is always at least as large as the mean outcome under an optimal treatment rule that assigns treatment based only on available covariates. Moreover, in an extreme scenario where enough covariates are available to uniquely identify each individual in the population and perfectly predict their ideal treatment, the optimal rule based on these covariates is itself an ideal rule.

Define the expected outcome under a best treatment and an ideal rule as $\mu_T=\max\{E(Y^0),E(Y^1)\}$ and  $\mu_O=E\{\max(Y^0,Y^1)\}$. We will first look at the case when the outcome of interest is discrete and bounded, as  is commonly the case (e.g., scored questionnaire). Using the partial identifiability framework \cite{balke1994counterfactual,duarte2021automated}, the bounds are optimal values of linear programs. Specifically, for an outcome with $k$ possible values $y_1,y_2,\ldots,y_k$, we solve for the value of the $k^2$ probabilities $p_{jr}=P(Y^0=y_j,Y^1=y_r)$, $j,r=1,2,\ldots,k$, defining the joint distribution of $Y^1$ and $Y^0$ that maximizes the benefit of an ideal rule, under the constraint that the mean outcome and outcome variance in each arm must equal their observed value. Without loss of generality, we assume that the average outcome is higher in the treatment arm. If the average outcome is higher in the control arm, we can switch the labels in the following equation. An upper bound on $\mu_O-\mu_T$ is given by the optimal value of the following linear program:
\begin{eqnarray}
&\textnormal{maximize} &\sum_{r=1}^k\sum_{j=1}^k (y_r-y_j)p_{rj}I(y_j < y_r) \label{lp_bound}\\[0.5em]
&\textnormal{subject to } &\min_{r,j} p_{rj}\ge 0,\hspace{1em} \sum_{r=1}^k\sum_{j=1}^k p_{rj} = 1,\hspace{1em}
\sum_{r=1}^k \sum_{j=1}^k y_j p_{rj} = E(Y^1)\ ,\hspace{1em}
\sum_{r=1}^k\sum_{j=1}^k y_r p_{rj}  = E(Y^0)\ , \nonumber\\
& &\sum_{r=1}^k \sum_{j=1}^k y_j^2 p_{rj} = \mathrm{var}(Y^1) + \{E(Y^1)\}^2\ ,\hspace{1em}\sum_{r=1}^k\sum_{j=1}^k y_r^2 p_{rj} =\mathrm{var}(Y^0) + \{E(Y^0)\}^2\ . \nonumber
\end{eqnarray}
This upper bound is tight in the sense that there necessarily exists a joint distribution of $Y^0$ and $Y^1$ that is consistent with the observed summary statistics and under which $\mu_O-\mu_T$ equals the bound. 
A tight lower bound can be found by minimizing the objective function subject to the same constraints. The described linear programs can be readily solved with existing software such as the `lpsolve' package in R \cite{lpsolve, r}. 

As an ideal rule always results in mean outcomes at least as large as those under optimal rules based on available covariates, \eqref{lp_bound} also provides an upper bound on the difference between expected outcomes under an optimal rule and a best single treatment. However, whether the lower bound will be lower than the value of an optimal rule depends on the extent to which the covariates used to create the rule are predictive of the individual-level treatment effect. 

Unsurprisingly, the lower bound on the benefit of individualizing treatment will often be quite close to zero, and in many cases it will be exactly zero. In fact, the lower bound can be zero even when \eqref{teh_bound} or \eqref{teh_bound_bounded} suggests that the variance of the treatment effect, ${\rm var}(\Delta)$, must be greater than zero. To see why this is possible, note that when the average treatment effect $E(\Delta)$ is far from zero, it is possible that there is individual level heterogeneity in treatment effect, that is, ${\rm var}(\Delta)>0$, and yet the ideal rule would still give everyone the same treatment. Nevertheless, the lower bound can be meaningfully larger than zero in certain cases. This will happen, in particular, when the lower bound on the variance of the treatment effect is large and yet the average treatment effect is zero or nearly zero. In such cases, there will necessarily be some individuals for whom receiving treatment is optimal and others for whom receiving control is optimal. Hence, there will necessarily be a benefit from individualization in such cases, and so the tight lower bound derived by linear programming will reflect that.

When the outcome is binary, the linear programming bounds simplify to the following closed-form bounds:
\begin{equation}\label{rule_bound_binary}\begin{matrix}
0\leq \mu_O - \mu_T\leq \mathrm{min}\{E(Y^0),1-E(Y^1)\} & \text{if }E(Y^0)\leq E(Y^1),\\
0\leq \mu_O - \mu_T\leq \mathrm{min}\{1-E(Y^0),E(Y^1)\} & \text{if }E(Y^0)>E(Y^1).
\end{matrix}
\end{equation} As can be seen from the lower bounds above, in the binary case it is always possible that there is no benefit to individualizing treatment.  

It is also possible to provide general closed-form upper bounds on the ideal treatment rule benefit. Using the bound on the treatment effect heterogeneity given in the previous section, we can bound the difference in average outcomes under an ideal treatment rule and a best single treatment. We can define $B^+$ as the relevant upper bound on the treatment effect variance, so that the result uses \eqref{teh_bound} for unbounded outcomes and \eqref{teh_bound_bounded} for outcomes with a bounded range (e.g., binary outcomes or questionnaire scores). The bound
\begin{equation}\label{rule_bound}
\mu_O-\mu_T\leq \frac{1}{2}\sqrt{B^+ +\{E(\Delta)\}^2}
\end{equation}
holds with $E(\Delta)$ representing the average treatment effect. Unlike the linear programming bound, this bound has a closed-form expression and applies even if the outcome can take infinitely many values. However, when the linear programming bound does apply, it will always be at least as tight as the one above.

\section{Extension: incorporating a stratification variable}\label{extensions_section}

Often, trial summary statistics are reported not only for each trial arm but also by levels of a baseline stratification variable.
Common examples of stratification factors include sex \cite{legates2019sex} and baseline indicators of disease severity and complexity \cite{bartova2019results}. We focus on the case where the stratification factor $C$ can take possible values $c_1,c_2,\ldots,c_k$. We derive improvements on the bounds from Section~\ref{bounds_section} in the case where there is knowledge of the marginal probability $\mathrm{pr}(C=c_s)$ of belonging to each subgroup and of the same summary statistics as were used in Section~\ref{bounds_section}, but within subgroups defined by $C$.

Given lower and upper bounds $B^-_s$ and $B^+_s$ for treatment effect heterogeneity within stratum $C=c_s$, 
the treatment effect heterogeneity, $\mathrm{var}(\Delta)$, can be bounded as
\begin{equation}\label{strata_bound}
r_C+\sum_{s=1}^k \mathrm{pr}(C=c_s) B_s^-\leq \mathrm{var}(\Delta)\leq r_C+\sum_{s=1}^k \mathrm{pr}(C=c_s) B_s^+\ ,
\end{equation}where we define $r_C=\sum_{s=1}^k \mathrm{pr}(C=c_s)\left\{E(\Delta\mid C=c_s)-E(\Delta)\right\}^2$. These bounds are each weighted sums of two terms across strata. The first term is the variance of the stratified average treatment effect around the overall average expected treatment effect. The second term bounds the average treatment effect variance conditional on $C=c_s$.

To obtain tight bounds on the benefit of individualizing treatment, we decompose this benefit into two terms. The first corresponds to the benefit of knowing a patient's stratum of $C$ when deciding their treatment relative to not knowing any individual-level information. The second corresponds to the further benefit of knowing all individual-level information relevant to making a treatment decision, such that it is possible to assign the ideal treatment. In other words, the tight bounds on the benefit of individualizing treatment leverage the decomposition
\begin{align}
    \mu_O-\mu_T = (\mu_C-\mu_T) + (\mu_O-\mu_C), \label{eq:stratifiedDecomp}
\end{align}
where $\mu_C$ denotes the population average outcome under the rule that assigns each patient the treatment that performs best in their stratum of $C$. Note that $\mu_C-\mu_T=\sum_{s=1}^k \mathrm{pr}(C=c_s)\max\{E(Y^0|C=c_s),E(Y^1|C=c_s)\}-\mu_T$. Hence, the value of the first term above can be computed directly from strata-specific summary information. In contrast, the precise value of the second term cannot be learned from such coarse information. Instead, bounds on this quantity can be derived by applying the linear programming technique in Section~\ref{sec:bdsOptRuleBenefit} within each stratum of $C$. In particular, $\sum_{s=1}^k\mathrm{pr}(C=c_s)b^-_{c_s}\le \mu_O-\mu_C\le \sum_{s=1}^k\mathrm{pr}(C=c_s)b^+_{c_s}$, where $b^-_{c_s}$ and $b^+_{c_s}$ are optimal values of linear programs presented in Appendix \ref{app:stratifiedlp}. Returning to \eqref{eq:stratifiedDecomp}, the final bounds on the benefit of individualization are given by
\begin{align*}
    (\mu_C-\mu_T)+\sum_{s=1}^k\mathrm{pr}(C=c_s)b^-_{c_s}\leq\mu_O-\mu_T\leq (\mu_C-\mu_T)+\sum_{s=1}^k\mathrm{pr}(C=c_s)b^+_{c_s}.
\end{align*}
These bounds are at least as tight as the linear programming bound in Section~\ref{sec:bdsOptRuleBenefit}, which did not use stratum-level information.

The equation in \eqref{eq:stratifiedDecomp} also implies a closed-form upper bound on the benefit of an ideal treatment rule. In particular, given upper bounds $B^-_s$ and $B^+_s$ for treatment effect heterogeneity within stratum $C=c_s$,
\begin{equation}
\mu_O-\mu_T\leq \mu_C-\mu_T +\frac{1}{2}\sum_{s=1}^k \mathrm{pr}(C=c_s)\sqrt{B_s^+ +\{E(\Delta|C=c_s)\}^2}\ .
\end{equation}
This bound applies regardless of how many values the outcome can take. However, when the outcome can only take finitely many values so that the linear programming bound applies, that bound is generally tighter than the above.

\section{Estimation and inference for the bounds}\label{estimation_section}

The bounds presented thus far involve the true (stratum-specific) average treatment effect and potential outcome variances. In practice, these quantities are unknown and must be estimated. Based only on summary statistics, we can construct consistent estimators for the bounds and asymptotically conservative confidence bounds for the estimators.

Again, we begin by defining notation. We denote by $\bar{Y}^1$ and $\bar{Y}^0$ the sample means of  outcomes in the treatment and control arms, respectively, and by $\bar{\Delta}=\bar{Y}^1 - \bar{Y}^0$ the sample average treatment effect. We also denote by $s_1$ and $s_0$ the sample standard deviations of outcomes in the treatment and control arms, respectively. The general bound on the treatment effect heterogeneity in \eqref{teh_bound}, the closed-form bound on the benefit of the optimal rule in \eqref{rule_bound}, and the rule benefit bound for binary outcomes in \eqref{rule_bound_binary} can be estimated consistently by replacing $E(\Delta)$, $\mathrm{var}(Y^0)$ and $\nu$ by $\bar{\Delta}$,  $s_0^2$ and $s_1/s_0$, respectively, in the provided formulas. The bounds involving a stratifying covariate in Section \ref{extensions_section} can be consistently estimated using a similar approach implemented within each stratum. The linear programming bound on the rule benefit for discrete bounded outcomes can be estimated by replacing $E(Y^0)$, $E(Y^1)$, $\mathrm{var}(Y^0)$ and $\mathrm{var}(Y^1)$  by $\bar{Y}^0$, $\bar{Y}^1$, $s_1^2$ and $s_0^2$, respectively, in the constraints. The estimated bound will tend to the true bound if the range of the outcome is positive (or the outcome is transformed to have a positive range) --- see Corollary 3.1 of \cite{bereanu1976continuity}.

As for uncertainty quantification, we consider two settings with different levels of available information in Appendix  \ref{sup_inference}. First, we consider scenarios in which only the sample mean and variance in each arm of the trial are available. In such cases, there is insufficient information to construct standard confidence intervals for the treatment effect heterogeneity bound or for the bound on the ideal rule benefit. Nevertheless, if the outcome is bounded, asymptotically conservative confidence intervals can be constructed. The method we propose is applicable, for example, when the outcome is binary or a value on a rating scale. Second, we consider scenarios in which individual trial patient data are available for the outcome but not other covariates (e.g., due to privacy concerns). In such cases, asymptotically valid confidence intervals can be constructed for the bounds on unbounded outcomes in \eqref{teh_bound_bounded} and \eqref{rule_bound}. We provide variance formulas that can be used to construct confidence intervals and hypothesis tests  for these two settings in Appendix  \ref{sup_inference}.

\section{Illustration of bounds using EMBARC trial data}
\label{example_section}

The EMBARC study was a sequential multiple assignment randomized trial in which 296 adults with early-onset (age $<30$ years) recurrent MDD were randomized either to the SSRI sertraline or a placebo, with non-responders undergoing randomization to a second treatment after eight weeks \cite{webb2019personalized}. Outcomes were only reported for participants who completed treatment (115 on sertraline, 123 on placebo). One of the primary outcomes, which we focus on, was the post-treatment 17-item Hamilton Rating Scale for Depression, for which a lower value is better. We focus on treatment response after eight weeks from randomization to examine treatment heterogeneity in sertraline response. 

Although the computation of our bounds only requires access to trial summary statistics, we use the full EMBARC dataset to assess how close our estimated bounds are to the estimates obtained based on the entire dataset. Specifically,  we illustrate the use of our method by using the data to estimate the benefit of an optimal treatment rule based on the covariates collected in the trial. We calculate these estimates using cross-validated targeted minimum loss-based estimation \cite{luedtke2016super}. We restrict our analysis to patients with non-missing outcomes, as missing outcomes would be excluded from the relevant summary statistics. We use single imputation to address missing covariate values; details are provided in Appendix  \ref{sup_data_example}. Importantly, this estimate does not fully capture the potential gains from individualizing treatment in each scenario, as there may be additional unrecorded patient characteristics whose use would lead to improved rules.

The possible range of the 17-item Hamilton Scale is 0 through 52, but at eight weeks, the subjects' scores ranged from 0 through 32. The mean values in the treatment and placebo arms were 10.73 and 11.94, respectively, leading to an estimated average treatment effect of $-1.21$. The sample standard deviations in the treatment and placebo arms are 6.53 and 7.52, respectively. 

Using the linear optimization bounds on the benefit of individualized treatment using only summary statistics, we estimate that the range of possible benefits is $(0,6.43)$ points on the Hamilton scale with a 95\% confidence interval of $(0,12.25)$. Hence, no rule, regardless of its form or the covariates involved, will outperform treating everyone by more than an estimated 6.43 points (97.5\% upper confidence limit: 12.25 points). The trial results are also consistent with little or no gain from individualizing treatment. An analysis using the full, individual-level EMBARC data lends more support to the latter of these possibilities, with a 95\% confidence interval of $(0,0.09)$ for the benefit of tailoring treatment based on the covariates measured in the trial.  Additionally, although in this trial outcome data were not reported stratified by an additional variable, we used the full data to create several bounds that would have been computable were stratified outcome summaries reported. Bounds using binarized sex and age group were not meaningfully different from the overall bound. 

As mentioned in Section~\ref{sec:bdsOptRuleBenefit}, the magnitude of our bounds can change depending on the value of the average treatment effect. To illustrate this, we consider the hypothetical settings where the mean on the placebo arm and the variances on both arms remain the same as those observed in the EMBARC study, but the estimated average treatment effect is either larger or smaller. 
If the estimated average treatment effect had been three times as great as in the actual study, then the bound on the benefit of individualized treatment based on the summary statistics would have been (0, 5.43), with a 95\% confidence interval of (0, 9.83). If there had been no estimated average treatment effect, then the bound on the benefit of individualized treatment based on the summary statistics would have been (0.14, 7.01), with a 95\% confidence interval of (0, 13.46). These results suggest that a higher treatment effect will lead to a lower potential benefit of individualized treatment, whereas no treatment effect leads to a higher potential benefit.

\section{Discussion}

So, can the potential benefit of individualizing treatment be assessed using trial summary statistics alone? 
Yes, to an extent. 
Undoubtedly, the benefit cannot be assessed to a high degree of precision without individual-level data containing treatment effect modifiers. 
However, collecting such data is often time-consuming and expensive, so having a strategy for narrowing down the range of potential benefits beforehand is valuable. 
To this end, we have provided an approach for using summary statistics to give the tightest possible bounds on the benefit of individualizing treatment. 
In certain cases, our approach even makes it possible to determine that there must be some benefit and, if so, to specify a lower bound on that benefit.

In our illustration based on EMBARC data, although the upper bound found is considerably higher than the benefit estimated with the baseline covariates, no tighter bound on the benefit below our linear programming bound can be found without individual-level data.  Furthermore, the discrepancy may stem from the fact that the bound we have derived is on the maximum benefit using all possible covariates, whereas the estimated benefit only accounts for covariates collected in the study, which may not include key treatment effect modifiers.

Several extensions of our methods are possible. Though we focused on the case where there are only two treatments, our linear programming bounds can easily be generalized to settings where the are multiple treatments. Our method can also be extended to observational designs or studies using covariate-based randomization schemes if adjusted estimates are provided or the full data is available. However, in these cases, the arm-specific empirical outcome means and variances no longer provide valid estimates of the means and variances of the potential outcomes under each treatment.

The bounds we have derived are especially helpful for characterizing the potential for personalized treatment for a disease across a broad treatment class. This is the case in our motivating example, where researchers are attempting to determine whether personalized treatment is plausibly beneficial in the general use of SSRI treatments for MDD \cite{maslej2020individual,munkholm2020individual}. Our primary bounds only require arm-specific outcome means and variances. Common means and variances can be estimated across all applicable studies using methods for meta-analysis. Then, our method could be used to assess the general benefit of individualized treatment across studies. Our bounds are not specific to the depression treatment case and could also be used to investigate the potential for benefit from individualized treatment for other diseases.

\section*{Acknowledgements}

This work was supported by the National Institutes of Health (NIH) through award numbers DP2LM013340, R01HL137808, and T32ES015459. NG was partially supported by the National Science Foundation (NSF) Graduate Research Fellowship Program under grant number DGE-1762114. Data used in the preparation of this manuscript were obtained from the National Institute of Mental Health Data Archive (NDA). NDA is a collaborative informatics system created by the NIH to provide a national resource to support and accelerate research in mental health. Dataset identifier: 2199. Any opinions, findings, conclusions, or recommendations expressed in this material are those of the authors and do not necessarily reflect the views of the NIH, NSF, or of the submitters of the original data to NDA.

\bibliographystyle{acm}
\bibliography{teh_bound_lit}

\appendix

\renewcommand{\thesection}{\Alph{section}}

\setcounter{equation}{0}
\renewcommand{\theequation}{S\arabic{equation}}

\renewcommand{\thetable}{S\arabic{table}}
\renewcommand{\thefigure}{S\arabic{figure}}

\section*{Appendices}

Appendix \ref{var_bounds} presents bounds on $\mathrm{var}(\Delta)$ both in the cases where $Y$ is and is not a bounded random variable. Appendix~\ref{app:idealTxBenefit} presents bounds on the benefit of individualizing treatment. Appendix \ref{extra_bounds} presents the form of the linear programs used when stratified summary statistics are available, along with the proofs regarding the validity and relative tightness of our closed-form bound for this case. Appendix \ref{sup_inference} presents methods to obtain inference for the bounds. Appendix \ref{sup_data_example} presents further details on the illustration with data from the EMBARC trial.

\section{Proofs of bounds on treatment effect heterogeneity}\label{var_bounds}

\subsection{General case}\label{sup_var_bound}

\begin{prop}\label{prop:sup_var_bound}
If $\mathrm{var}(Y^0)$ and $\mathrm{var}(Y^1)$ are both finite and $\mathrm{var}(Y^0)>0$, then
\begin{align}\label{eq:tightVarBd}
    \mathrm{var}(Y^0)(\nu-1)^2  \leq \mathrm{var}(\Delta)\leq \mathrm{var}(Y^0)(\nu+1)^2.
\end{align}
\end{prop}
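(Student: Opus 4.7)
The plan is to reduce the statement to Cauchy--Schwarz applied to the covariance between $Y^0$ and $Y^1$. Since $\Delta = Y^1 - Y^0$, the bilinearity of covariance gives the identity
\begin{equation*}
\mathrm{var}(\Delta) = \mathrm{var}(Y^1) + \mathrm{var}(Y^0) - 2\,\mathrm{cov}(Y^0,Y^1),
\end{equation*}
and the finiteness hypotheses on the two arm-specific variances ensure that all quantities appearing here are well defined and finite.

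Next, I would invoke the Cauchy--Schwarz inequality for covariances, namely
\begin{equation*}
-\sqrt{\mathrm{var}(Y^0)\,\mathrm{var}(Y^1)} \;\le\; \mathrm{cov}(Y^0,Y^1) \;\le\; \sqrt{\mathrm{var}(Y^0)\,\mathrm{var}(Y^1)},
\end{equation*}
and substitute these bounds into the identity above. This yields
\begin{equation*}
\bigl(\sqrt{\mathrm{var}(Y^1)} - \sqrt{\mathrm{var}(Y^0)}\bigr)^2 \;\le\; \mathrm{var}(\Delta) \;\le\; \bigl(\sqrt{\mathrm{var}(Y^1)} + \sqrt{\mathrm{var}(Y^0)}\bigr)^2,
\end{equation*}
after recognizing the right-hand sides as perfect squares.

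Finally, I would translate the bound into the stated form by dividing inside the squares by $\sqrt{\mathrm{var}(Y^0)}$, which is strictly positive by hypothesis, and using the definition $\nu = \sqrt{\mathrm{var}(Y^1)/\mathrm{var}(Y^0)}$. Pulling out the common factor of $\mathrm{var}(Y^0)$ then produces the announced inequality $\mathrm{var}(Y^0)(\nu-1)^2 \le \mathrm{var}(\Delta) \le \mathrm{var}(Y^0)(\nu+1)^2$.

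There is no real obstacle: the entire argument is two lines, and the only reason the positivity of $\mathrm{var}(Y^0)$ is needed is to make the reparametrization through $\nu$ legitimate (otherwise the identical bound in the preceding display still holds, with $\nu$ simply undefined). The bounds are tight in the Cauchy--Schwarz step precisely when $Y^1$ is an affine function of $Y^0$ with positive slope (lower bound) or negative slope (upper bound), which matches the paper's earlier remark about perfect positive and negative correlation.
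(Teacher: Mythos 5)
Your proof is correct and follows essentially the same route as the paper's: both expand $\mathrm{var}(\Delta)=\mathrm{var}(Y^1)+\mathrm{var}(Y^0)-2\,\mathrm{cov}(Y^0,Y^1)$ and then bound the cross term, the paper via $\mathrm{cor}(Y^0,Y^1)\in[-1,1]$ and you via the equivalent Cauchy--Schwarz bound on the covariance. The subsequent reparametrization through $\nu$ and the remark on when the bounds are attained also match the paper.
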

\begin{proof}
Using that $\Delta=Y^1-Y^0$ and the definition of the correlation between two random variables,
\begin{align*}
\mathrm{var}(\Delta) & =\mathrm{var}(Y^1-Y^0)=\mathrm{var}(Y^1)+\mathrm{var}(Y^0)-2\mathrm{cov}(Y^1,Y^0)\\
&=\mathrm{var}(Y^1)+\mathrm{var}(Y^0)-2\mathrm{cor}(Y^1,Y^0)[\mathrm{var}(Y^1)\mathrm{var}(Y^0)]^{1/2}.
\end{align*}
As $\mathrm{var}(Y^1)=\nu^2 \mathrm{var}(Y^0)$, the above rewrites as
\begin{align*}
\mathrm{var}(\Delta)
&=(\nu^2+1)\mathrm{var}(Y^0)-2\mathrm{cor}(Y^1,Y^0)\nu \mathrm{var}(Y^0).
\end{align*}
Dividing both sides by $\mathrm{var}(Y^0)>0$ shows that
\begin{align}
\label{eq_for_diff}
\frac{\mathrm{var}(\Delta)}{\mathrm{var}(Y^0)}=\nu^2+1-2\mathrm{cor}(Y^1,Y^0)\nu.
\end{align}
As correlations fall in $[-1,1]$,
\begin{align*}
\nu^2+1-2\nu \leq \frac{\mathrm{var}(\Delta)}{\mathrm{var}(Y^0)}&\leq \nu^2+1+2\nu.
\end{align*}
The left- and right-hand sides above rewrite as $(\nu-1)^2$ and $(\nu+1)^2$, respectively.
\end{proof}
The bounds in Eq.~\ref{eq:tightVarBd} can be shown to be tight, in the sense that, for any values $\mu_0$ and $\mu_1$ of the arm-specific means $E(Y^0)$ and $E(Y^1)$ and any positive values $\sigma_0^2$ and $\sigma_1^2$ of the arm-specific variances $\mathrm{var}(Y^0)$ and $\mathrm{var}(Y^1)$, each of these bounds can be attained for at least one joint distribution of the counterfactual outcomes $Y^0$ and $Y^1$ that has these arm-specific means and variances. To achieve the lower bound, we can take $(Y^0,Y^1)$ to be a random variable that is uniformly distributed over the doubleton set $\{(\mu_0-\sigma_0,\mu_1-\sigma_1),(\mu_0+\sigma_0,\mu_1+\sigma_1)\}$. To achieve the upper bound, we can instead take $(Y^0,Y^1)$ to be a random variable that is uniformly distributed over the doubleton set $\{(\mu_0-\sigma_0,\mu_1+\sigma_1),(\mu_0+\sigma_0,\mu_1-\sigma_1)\}$.

\subsection{Improved bound for bounded outcomes}\label{sup_var_bound_bounded}

\begin{prop}\label{prop:sup_var_bound_bdd}
For an outcome $Y$ bounded in a range $(m,M)\subset \R$, the following bound holds:
\begin{align}\label{eq:tightVarBdBddY}
\mathrm{var}(Y^1)+\mathrm{var}(Y^0) - s^{-}\le \mathrm{var}(\Delta)\le \mathrm{var}(Y^1)+\mathrm{var}(Y^0) + s^{+},
\end{align}
where
\begin{align*}
    s^-&=2\,\mathrm{min}\left[\sqrt{\mathrm{var}(Y^0)\mathrm{var}(Y^1)},\{M-E(Y^0)\}\{E(Y^1)-m\},\{E(Y^0)-m\}\{M-E(Y^1)\}\}\right], \\
    s^+&=2\,\mathrm{min}\left[\sqrt{\mathrm{var}(Y^0)\mathrm{var}(Y^1)},\{E(Y^0)-m\}\{E(Y^1)-m\},\{M-E(Y^0)\}\{M-E(Y^1)\right].
\end{align*}
\end{prop}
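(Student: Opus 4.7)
The plan is to reduce \eqref{eq:tightVarBdBddY} to a two-sided bound on $\mathrm{cov}(Y^0,Y^1)$ via the decomposition $\mathrm{var}(\Delta)=\mathrm{var}(Y^0)+\mathrm{var}(Y^1)-2\mathrm{cov}(Y^0,Y^1)$ already used in the proof of Proposition~\ref{prop:sup_var_bound}. Under this rewriting, the desired inequality becomes equivalent to
\[-\tfrac{1}{2} s^+ \le \mathrm{cov}(Y^0,Y^1) \le \tfrac{1}{2} s^-.\]
Each of $s^-$ and $s^+$ is twice the minimum of three nonnegative quantities, one of which is the Cauchy--Schwarz term $\sqrt{\mathrm{var}(Y^0)\mathrm{var}(Y^1)}$; the remaining four quantities are the four ``corner products'' that can be formed from the endpoints $m$ and $M$.

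For the upper bound on $\mathrm{cov}(Y^0,Y^1)$, the key observation is that $(Y^0-m)(M-Y^1)\ge 0$ and $(M-Y^0)(Y^1-m)\ge 0$ almost surely, because $Y^0,Y^1\in[m,M]$. Expanding each product, taking expectations, and subtracting $E(Y^0)E(Y^1)$ from both sides yields $\mathrm{cov}(Y^0,Y^1)\le \{E(Y^0)-m\}\{M-E(Y^1)\}$ and $\mathrm{cov}(Y^0,Y^1)\le \{M-E(Y^0)\}\{E(Y^1)-m\}$, respectively. Combining these with the Cauchy--Schwarz inequality gives $\mathrm{cov}(Y^0,Y^1)\le s^-/2$, which in turn delivers the lower bound on $\mathrm{var}(\Delta)$.

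For the lower bound on $\mathrm{cov}(Y^0,Y^1)$, a symmetric argument uses the remaining two corner products $(Y^0-m)(Y^1-m)\ge 0$ and $(M-Y^0)(M-Y^1)\ge 0$ to produce $\mathrm{cov}(Y^0,Y^1)\ge -\{E(Y^0)-m\}\{E(Y^1)-m\}$ and $\mathrm{cov}(Y^0,Y^1)\ge -\{M-E(Y^0)\}\{M-E(Y^1)\}$. Combining with the Cauchy--Schwarz lower bound $-\sqrt{\mathrm{var}(Y^0)\mathrm{var}(Y^1)}$ yields $\mathrm{cov}(Y^0,Y^1)\ge -s^+/2$, which is the upper bound on $\mathrm{var}(\Delta)$. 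The proof uses no deep machinery; the only delicate step is the bookkeeping that matches each of the four corner-product inequalities with the correct term of $s^+$ or $s^-$. Once this matching is set, all six one-sided inequalities reduce to routine algebraic rearrangement, and the result then follows immediately by substituting the covariance bounds into the variance decomposition.
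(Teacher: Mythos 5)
Your proof is correct, and it takes a more self-contained route than the paper at the key step. Both arguments start from the same decomposition $\mathrm{var}(\Delta)=\mathrm{var}(Y^0)+\mathrm{var}(Y^1)-2\,\mathrm{cov}(Y^0,Y^1)$, reducing the claim to the two-sided covariance bound $-s^+/2\le \mathrm{cov}(Y^0,Y^1)\le s^-/2$. At that point the paper simply cites Theorem~2 of H\"{o}ssjer and Sj\"{o}lander for the covariance bound, whereas you prove it from scratch: Cauchy--Schwarz supplies the $\sqrt{\mathrm{var}(Y^0)\mathrm{var}(Y^1)}$ term, and the four corner-product inequalities $(Y^0-m)(M-Y^1)\ge 0$, $(M-Y^0)(Y^1-m)\ge 0$, $(Y^0-m)(Y^1-m)\ge 0$, and $(M-Y^0)(M-Y^1)\ge 0$ supply the remaining four terms after expanding, taking expectations, and subtracting $E(Y^0)E(Y^1)$. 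I checked the bookkeeping: the first two corner products correctly yield the upper covariance bounds appearing in $s^-$ (hence the lower bound on $\mathrm{var}(\Delta)$), and the last two correctly yield the lower covariance bounds appearing in $s^+$ (hence the upper bound on $\mathrm{var}(\Delta)$). What your approach buys is an elementary, citation-free proof of validity; what it does not give (and does not claim) is the sharpness of the bound, which the paper obtains for free from the cited theorem and discusses in the remark following the proposition. If you wanted tightness as well, you would need to additionally exhibit joint distributions attaining each of the candidate extremes, which the external reference handles.
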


\begin{proof}
The bound follows from the fact that
\begin{align*}
    \mathrm{var}(\Delta) &= \mathrm{var}(Y^1)+\mathrm{var}(Y^0)-2\mathrm{cov}(Y^1,Y^0),
\end{align*}
along with the bound on $\mathrm{cov}(Y^1,Y^0)$ from Theorem 2 of H\"{o}ssjer and Sj\"{o}lander \cite{hossjer2022sharp}. 

\end{proof}

The tightness of the bounds in Eq.~\ref{eq:tightVarBdBddY} follows from the fact that Theorem 2 of H\"{o}ssjer and Sj\"{o}lander provides tight bounds on the covariance between bounded random variables with a given range \cite{hossjer2022sharp} --- see that reference for further information.

\section{Proofs of bounds on the benefit of individualization}\label{app:idealTxBenefit}

\subsection{Closed-form bound on the benefit of individualization}\label{sup_rule_bound}

\begin{prop}
The following is a bound on the benefit of individualizing treatment:
\begin{equation*}
\mu_O-\mu_T\leq \frac{1}{2}\sqrt{B^+ +\{E(\Delta)\}^2},
\end{equation*}
where $B^+$ is the relevant upper bound on the treatment effect variance, so that the result uses Eq.~\ref{teh_bound} for unbounded outcomes and Eq.~\ref{teh_bound_bounded} for outcomes with a bounded range (e.g., binary outcomes or questionnaire scores).

\end{prop}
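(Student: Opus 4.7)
The plan is to rewrite $\mu_O - \mu_T$ in terms of $E|\Delta|$ and $|E(\Delta)|$ via the elementary identity $\max(a,b) = \tfrac{1}{2}(a+b) + \tfrac{1}{2}|a-b|$, then control $E|\Delta|$ by its second moment using Jensen's inequality, and finally substitute the assumed upper bound $B^{+}$ on $\mathrm{var}(\Delta)$.

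First I would apply the identity to $\mu_O = E[\max(Y^0,Y^1)]$ to obtain
\begin{equation*}
\mu_O = \tfrac{1}{2}\{E(Y^0) + E(Y^1)\} + \tfrac{1}{2}E|\Delta|,
\end{equation*}
and to $\mu_T = \max\{E(Y^0),E(Y^1)\}$ to obtain
\begin{equation*}
\mu_T = \tfrac{1}{2}\{E(Y^0) + E(Y^1)\} + \tfrac{1}{2}|E(\Delta)|.
\end{equation*}
Subtracting yields the clean representation
\begin{equation*}
\mu_O - \mu_T = \tfrac{1}{2}\bigl(E|\Delta| - |E(\Delta)|\bigr),
\end{equation*}
which, by Jensen applied to $|{\cdot}|$, is non-negative (as expected since an ideal rule cannot be worse than the best single treatment).

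Next, I would bound $E|\Delta|$ by $\sqrt{E(\Delta^2)}$ using Jensen's inequality on the convex map $x \mapsto x^2$, and then write $E(\Delta^2) = \mathrm{var}(\Delta) + \{E(\Delta)\}^2 \le B^{+} + \{E(\Delta)\}^2$, where the inequality uses the upper bound on $\mathrm{var}(\Delta)$ from Proposition \ref{prop:sup_var_bound} for unbounded outcomes or from Proposition \ref{prop:sup_var_bound_bdd} for bounded outcomes. Combining these steps,
\begin{equation*}
\mu_O - \mu_T \;\le\; \tfrac{1}{2}E|\Delta| \;\le\; \tfrac{1}{2}\sqrt{E(\Delta^2)} \;\le\; \tfrac{1}{2}\sqrt{B^{+} + \{E(\Delta)\}^2},
\end{equation*}
where the first inequality simply drops the non-negative term $\tfrac{1}{2}|E(\Delta)|$.

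There is no real obstacle here; the only subtle point is making sure the two max-identity computations are applied consistently (inside the expectation for $\mu_O$ and outside for $\mu_T$) so that the $E|\Delta|$ and $|E(\Delta)|$ terms appear separately. One could note in passing that the proof actually yields the slightly sharper bound $\mu_O - \mu_T \le \tfrac{1}{2}\bigl(\sqrt{B^{+}+\{E(\Delta)\}^2} - |E(\Delta)|\bigr)$, but since the proposition only asserts the weaker closed-form inequality, dropping the $|E(\Delta)|$ term at the end suffices.
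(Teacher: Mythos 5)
Your proof is correct, and it follows the same overall strategy as the paper's: reduce the benefit to $\tfrac{1}{2}E|\Delta|$, apply Jensen's inequality to get $\tfrac{1}{2}\sqrt{E(\Delta^2)}$, and substitute the variance bound $B^+$. The one place you genuinely diverge is the reduction step. The paper fixes the sign (WLOG $E(Y^1)\ge E(Y^0)$), introduces $Z=I\{\Delta\ge 0\}$, derives $\mu_O-E(Y^1)=E(|\Delta|I\{\Delta<0\})$, and then argues separately that $E(|\Delta|I\{\Delta<0\})\le \tfrac{1}{2}E(|\Delta|)$ using the sign of $E(\Delta)$. Your route via the identity $\max(a,b)=\tfrac{1}{2}(a+b)+\tfrac{1}{2}|a-b|$ produces the exact, sign-free representation $\mu_O-\mu_T=\tfrac{1}{2}\bigl(E|\Delta|-|E(\Delta)|\bigr)$ in one line; this is the same quantity the paper computes (since $E(|\Delta|I\{\Delta<0\})=\tfrac{1}{2}\{E|\Delta|-E(\Delta)\}$ when $E(\Delta)\ge 0$), but your packaging avoids the case analysis, makes nonnegativity of the benefit immediate, and transparently yields the slightly sharper bound $\tfrac{1}{2}\bigl(\sqrt{B^{+}+\{E(\Delta)\}^2}-|E(\Delta)|\bigr)$ that you note in passing. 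Nothing is lost relative to the paper's argument, and the closed-form statement follows by dropping the $|E(\Delta)|$ term exactly as you say.
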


\begin{proof}
Without loss of generality, assume that $E(Y^1)\geq E(Y^0)$. Let $Z=I\{\Delta\geq 0\}$. Under the optimal rule, there is an expected outcome of
\begin{align*}
\mu_O&= E[Y^1Z+Y^0(1-Z)] =\mathrm{pr}(Z=1)E(Y^1\mid Z=1)+\mathrm{pr}(Z=0)E(Y^0\mid Z=0).
\end{align*}
Under treatment 1, the expected outcome is
\begin{align*}
E(Y^1)&= E[Y^1Z+Y^1(1-Z)]=\mathrm{pr}(Z=1)E(Y^1\mid Z=1)+\mathrm{pr}(Z=0)E(Y^1\mid Z=0).
\end{align*}
Hence, the difference between the expected outcomes under the  optimal rule and treatment 1 is
\begin{align}\label{benefit_form}
\mu_O - E(Y^1) &= \mathrm{pr}(Z=0)E(Y^0-Y^1\mid Z=0)=\mathrm{pr}(\Delta< 0)E(|\Delta|\mid \Delta< 0) = E(|\Delta|I\{\Delta<0\}).
\end{align}
As $E(Y^1)\geq E(Y^0)$, $ 0\le E(\Delta) = E(|\Delta|I\{\Delta\ge 0\}) - E(|\Delta|I\{\Delta<0\})$, and so $E(|\Delta|I\{\Delta<0\})\le E(|\Delta|I\{\Delta\ge 0\})$. Combining this with the fact that $E(|\Delta|)=E(|\Delta|I\{\Delta\ge 0\}) + E(|\Delta|I\{\Delta<0\})$, this shows that $E(|\Delta|)\ge 2E(|\Delta|I\{\Delta<0\})$. Plugging this into the above and applying Jensen's inequality yields that
\begin{align*}
\mu_O - E(Y^1)&\leq \frac{E(|\Delta|)}{2} \leq \frac{1}{2}\sqrt{E(\Delta^2)} = \frac{1}{2}\sqrt{\mathrm{var}(\Delta)+\{E(\Delta)\}^2}
\end{align*}
The result follows from Propositions \ref{prop:sup_var_bound} and \ref{prop:sup_var_bound_bdd} in Appendix \ref{var_bounds}.
\end{proof}

\subsection{Linear optimization bound on rule benefit for bounded outcomes}\label{sup_rule_bound_lp}

We now show that, when the outcome is discrete with finite support, linear programming can be used to derive upper and lower bounds on the benefit of individualizing treatment. These bounds are tight in the sense that, for any observable values of the arm-specific means $E(Y^0)$ and $E(Y^1)$ and arm-specific variances $\mathrm{var}(Y^0)$ and $\mathrm{var}(Y^1)$, there exists a joint distribution of $(Y^0,Y^1)$ for which the marginal distributions of $Y^0$ and $Y^1$ have these means and variances.
\begin{prop}
Suppose that the outcome has support $\{y_1,\ldots,y_k\}$, where $k<\infty$. Fix a joint distribution of $(Y^0,Y^1)$, and let $E(Y^0)$, $E(Y^1)$, $\mathrm{var}(Y^0)$, and $\mathrm{var}(Y^1)$ denote the corresponding arm-specific means and variances. Let $o=\arg\max_{a\in\{0,1\}} E(Y^a)$.  
The optimal value of the following linear programming problem provides a tight upper bound on the benefit of individualizing treatment, namely $\mu_O-\mu_T$:
\begin{eqnarray*}
&\textnormal{maximize} &\sum_{r=1}^k\sum_{j=1}^k (y_r-y_j)p_{rj}I(y_j < y_r) \\[0.5em]
&\textnormal{subject to }
& p_{rj}\ge 0\textnormal{ for all }(r,j)\in\{1,2,\ldots,k\}^2, \\
& &\sum_{r=1}^k\sum_{j=1}^k p_{rj} = 1,\hspace{1em}
\sum_{r=1}^k \sum_{j=1}^k y_j p_{rj} = E(Y^o)\ ,\hspace{1em}
\sum_{r=1}^k\sum_{j=1}^k y_r p_{rj}  = E(Y^{1-o})\ ,\\
& &\sum_{r=1}^k \sum_{j=1}^k y_j^2 p_{rj} = \mathrm{var}(Y^o) + \{E(Y^o)\}^2\ ,\hspace{1em}\sum_{r=1}^k\sum_{j=1}^k y_r^2 p_{rj} =\mathrm{var}(Y^{1-o}) + \{E(Y^{1-o})\}^2\ .
\end{eqnarray*}
The variables in the above linear program are $(p_{rj} : (r,j)\in\{1,2,\ldots,k\}^2)$. 
A tight lower bound on the benefit of individualizing treatment is given by the linear program that is the same as the above, but with the objective minimized instead of maximized.
\end{prop}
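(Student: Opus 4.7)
The plan is to recognize that the feasible region of the linear program is exactly the set of joint probability distributions on $\{y_1,\ldots,y_k\}^2$ whose marginals match the first two moments of $Y^o$ and $Y^{1-o}$, and that the objective of the linear program evaluates $\mu_O-\mu_T$ at such a distribution. Concretely, I would identify a feasible collection $(p_{rj})$ with the joint law $\mathrm{pr}(\tilde Y^o=y_j,\tilde Y^{1-o}=y_r)=p_{rj}$. Under this identification, the non-negativity and sum-to-one constraints say that $p$ is a probability mass function, while the four remaining linear constraints encode that the marginal means and second moments of $\tilde Y^o$ and $\tilde Y^{1-o}$ equal those of $Y^o$ and $Y^{1-o}$. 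Conversely, any such joint distribution yields a feasible point.

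Next, I would rewrite the quantity of interest in a form that matches the objective. Using $\max(Y^0,Y^1)=Y^o+(Y^{1-o}-Y^o)^+$ and $\mu_T=E(Y^o)$, we have $\mu_O-\mu_T=E\{(Y^{1-o}-Y^o)^+\}$. Under the joint law induced by a feasible $(p_{rj})$, this expectation equals $\sum_{r,j}(y_r-y_j)^+ p_{rj}=\sum_{r,j}(y_r-y_j)p_{rj}I(y_j<y_r)$, the two representations agreeing because both summands vanish when $y_r\le y_j$. This is exactly the LP objective.

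Putting these two observations together delivers both validity and tightness. Validity of the upper bound: the law of the true pair $(Y^o,Y^{1-o})$ is itself a feasible point at which the objective equals $\mu_O-\mu_T$, so the LP maximum is at least $\mu_O-\mu_T$. Tightness: since the feasible set is a closed and bounded subset of $[0,1]^{k^2}$ and the objective is linear, the maximum is attained at some $(p^{**}_{rj})$; reading this back as a joint distribution produces random variables with the prescribed marginal means and variances whose benefit of individualization equals the LP optimum, so no smaller upper bound can be certified from only the four given moment summaries. Replacing maximization by minimization throughout gives the tight lower bound.

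The proof is essentially bookkeeping rather than substantive analysis; the main thing to keep straight is the indexing convention (that $y_r$ corresponds to $Y^{1-o}$ while $y_j$ corresponds to $Y^o$), so that the indicator $I(y_j<y_r)$ in the objective correctly selects the event $Y^{1-o}>Y^o$, i.e., the region where individualization produces a gain. The edge case $E(Y^0)=E(Y^1)$ renders $o$ non-unique but is harmless, since both choices give the same $\mu_T$ and the same optimal value of the linear program.
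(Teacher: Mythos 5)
Your proposal is correct and follows essentially the same route as the paper's proof: identify each feasible point with a joint law of $(Y^o, Y^{1-o})$ matching the given means and variances, observe that the objective evaluates $\mu_O-\mu_T=E\{(Y^{1-o}-Y^o)^+\}$ under that law (the paper's Eq.~\ref{benefit_form}), obtain validity from feasibility of the true distribution, and obtain tightness from attainment of the optimum on the nonempty, bounded feasible region. Your added remarks on the indexing convention and the tie case $E(Y^0)=E(Y^1)$ are correct but not needed beyond what the paper records.
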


\begin{proof}
We only present the proof of the upper bound; the lower bound proof is nearly identical. Let $\mathbb{P}$ denote the true joint distribution of $(Y^0,Y^1)$ that was fixed in the statement of the proposition. 
For clarity, in this proof we will use subscripts to denote which distribution expectations, variances, and probabilities are taken under.

For a generic value $(p_{rj} : (r,j)\in\{1,2,\ldots,k\}^2)$ of the variable used by the linear program, we will let $P$ denote the joint distribution of $(Y^0,Y^1)$ such that $p_{rj}=\mathrm{pr}_P(Y^{1-o}=y_r,Y^o=y_j)$. The feasible region of the linear program consists precisely of those $p_{rj}$ for which the arm-specific means and variances under the corresponding distribution $P$ match those under $\mathbb{P}$, that is, for which $E_P(Y^0)=E_\mathbb{P}(Y^0)$, $E_P(Y^1)=E_\mathbb{P}(Y^1)$, $\mathrm{var}_P(Y^0)=\mathrm{var}_\mathbb{P}(Y^0)$, and $\mathrm{var}_P(Y^1)=\mathrm{var}_\mathbb{P}(Y^1)$. Moreover, by Eq. \ref{benefit_form}, the objective function of the linear program corresponds to the benefit of individualizing treatment when $(Y^0,Y^1)\sim P$. As $\mathbb{p}_{rj}=\mathrm{pr}_{\mathbb{P}}(Y^{1-o}=y_r,Y^o=y_j)$ belongs to the feasible region of the linear program, this implies that the optimal value provides an upper bound on $\mu_O-\mu_T$.

To show that the upper bound is tight, we begin by noting that the feasible region is a subset of the $(k^2-1)$-simplex, and therefore is bounded. Moreover, as $\mathbb{p}_{rj}=\mathrm{pr}_{\mathbb{P}}(Y^{1-o}=y_r,Y^o=y_j)$ belongs to the feasible region, this region is non-empty. As linear programs with bounded, non-empty feasible regions have an optimal solution, we can let $P^\star$ be a distribution such that the objective function evaluated at $p_{rj}^\star=\mathrm{pr}_{P^\star}(Y^{1-o}=y_r,Y^o=y_j)$ is equal to the optimal value. Hence, there exists a distribution $P^\star$ such that the benefit of individualizing treatment is equal to our upper bound when $(Y^0,Y^1)\sim P^\star$.
\end{proof}

\section{Results for the case where stratified summary statistics are available}\label{extra_bounds}

\subsection{Forms of the linear programs}\label{app:stratifiedlp}

We now present the forms of the linear programs used to define the quantities $b_c^+$ and $b_c^-$ discussed in Section~\ref{extensions_section} of the main text. Fix a realization $c$ of $C$. Let $o(c)=\arg\max_{a\in\{0,1\}} E(Y^a\mid C=c)$ denote the optimal treatment within stratum $c$. The quantity $b_{c}^+$ is the optimal value of the linear program
\begin{eqnarray*}\label{lp_bound_stratified}
&\textnormal{maximize} &\sum_{r=1}^k\sum_{j=1}^k (y_r-y_j)p_{rj}I(y_j < y_r) \\[0.5em]
&\textnormal{subject to } & p_{rj}\ge 0\textnormal{ for all }(r,j)\in\{1,2,\ldots,k\}^2, \\ & &\sum_{r=1}^k\sum_{j=1}^k p_{rj} = 1,\hspace{1em}
\sum_{r=1}^k \sum_{j=1}^k y_j p_{rj} = E[Y^{o(c)}\mid C=c]\ ,\hspace{1em}
\sum_{r=1}^k\sum_{j=1}^k y_r p_{rj}  = E[Y^{1-o(c)}\mid C=c]\ ,\\
& &\sum_{r=1}^k \sum_{j=1}^k y_j^2 p_{rj} = \mathrm{var}[Y^{o(c)}\mid C=c] + \{E[Y^{o(c)}\mid C=c)]\}^2\ \\&&\sum_{r=1}^k\sum_{j=1}^k y_r^2 p_{rj} =\mathrm{var}[Y^{1-o(c)}\mid C=c] + \{E[Y^{1-o(c)}\mid C=c]\}^2\ ,
\end{eqnarray*}
and $b^-_{c}$ is the optimal value of the same linear program but with the objective minimized instead of maximized. The quantity $p_{rj}$ above can be thought of as representing a potential value of $\mathrm{pr}(Y^{1-o(c)}=y_r,Y^{o(c)}=y_j\mid C=c)$.

\subsection{Closed-form bounds}\label{sup_strata_bounds}

We prove the general bound on treatment effect heterogeneity incorporating a stratifying variable.

\begin{prop}\label{strata_prop}
Let $C$ be a discrete covariate variable with levels $c_1,...,c_k$. Given lower and upper bounds $B^-_s$ and $B^+_s$ for treatment effect heterogeneity within covariate stratum $C=c_s$, 
the treatment effect heterogeneity, $\mathrm{var}(\Delta)$, can be bounded as
\begin{align}
r_C+\sum_{s=1}^k \mathrm{pr}(C=c_s) B_s^-\leq \mathrm{var}(\Delta)\leq r_C+\sum_{s=1}^k \mathrm{pr}(C=c_s) B_s^+\ , \label{eq:heterogeneityBd}
\end{align}
where we define $r_C=\sum_{s=1}^k \mathrm{pr}(C=c_s)\left\{E(\Delta\mid C=c_s)-E(\Delta)\right\}^2$.
\end{prop}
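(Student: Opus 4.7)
The plan is to use the law of total variance applied to the treatment effect $\Delta$ conditioned on the stratification variable $C$. Specifically,
\begin{align*}
\mathrm{var}(\Delta) = E[\mathrm{var}(\Delta\mid C)] + \mathrm{var}[E(\Delta\mid C)].
\end{align*}
This identity cleanly separates the heterogeneity into a within-stratum piece and a between-stratum piece, which match the two parts of the stated bound.

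First I would expand the between-stratum term. Since $C$ takes finitely many values $c_1,\ldots,c_k$, I can write $\mathrm{var}[E(\Delta\mid C)] = \sum_{s=1}^k \mathrm{pr}(C=c_s)\{E(\Delta\mid C=c_s)-E(\Delta)\}^2$, using the tower property to recognize that $E[E(\Delta\mid C)]=E(\Delta)$. This is exactly $r_C$, so the between-stratum contribution is known and requires no bounding.

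Next I would handle the within-stratum term by a direct expansion: $E[\mathrm{var}(\Delta\mid C)] = \sum_{s=1}^k \mathrm{pr}(C=c_s)\,\mathrm{var}(\Delta\mid C=c_s)$. By assumption, the conditional heterogeneity $\mathrm{var}(\Delta\mid C=c_s)$ is bracketed by $B_s^-$ and $B_s^+$. Because each weight $\mathrm{pr}(C=c_s)$ is nonnegative, I can apply these bounds term by term and sum, yielding $\sum_{s=1}^k \mathrm{pr}(C=c_s)B_s^- \le E[\mathrm{var}(\Delta\mid C)] \le \sum_{s=1}^k \mathrm{pr}(C=c_s)B_s^+$.

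Adding the fixed $r_C$ from the between-stratum piece to these upper and lower bounds immediately delivers Eq.~\ref{eq:heterogeneityBd}. There is really no main obstacle here: the result is essentially a bookkeeping exercise that makes the classical variance decomposition carry over to the $\mathrm{var}(\Delta)$ setting, combined with the fact that the stated $B_s^\pm$ are valid stratum-wise bounds by hypothesis (e.g., one may invoke Proposition~\ref{prop:sup_var_bound} or Proposition~\ref{prop:sup_var_bound_bdd} applied within each stratum to supply concrete choices of $B_s^\pm$). The only point that requires a little care is verifying that the tower property applies to $E(\Delta)$ and that the weights in the convex combination are nonnegative so that the stratum-wise inequalities can be summed without flipping direction.
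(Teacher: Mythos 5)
Your proposal is correct and follows exactly the paper's argument: both apply the law of total variance to $\Delta$ conditional on $C$, identify the between-stratum term with $r_C$, and bound the within-stratum term using $B_s^-\le \mathrm{var}(\Delta\mid C=c_s)\le B_s^+$ weighted by the nonnegative probabilities $\mathrm{pr}(C=c_s)$. No differences worth noting.
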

\begin{proof}
By the law of total variance,
\begin{align*}
    \mathrm{var}(\Delta)&= \mathrm{var}\{E(\Delta\mid C)\} + E\{\mathrm{var}(\Delta \mid C)\}\\
    &=\sum_{s=1}^k \mathrm{pr}(C=c_s) \left[\{E(\Delta\mid C=c_s)-E(\Delta)\}^2+\mathrm{var}(\Delta\mid C=c_s)\right].
\end{align*}
We can then use that $B_s^-\le \mathrm{var}(\Delta\mid C=c_s)\le B_s^+$ to show that the bound is valid.
\end{proof}
Similarly to the bound in Eqs.~\ref{eq:tightVarBd} and \ref{eq:tightVarBdBddY} for the case where stratified summary statistics are not available, the bound in Eq.~\ref{eq:heterogeneityBd} can be shown to be tight when $B_s^-$ and $B_s^+$ correspond to the stratum-$c_s$-specific variance bounds provided in Proposition~\ref{prop:sup_var_bound} (if the counterfactual outcomes are known to have finite variance) or in Proposition~\ref{prop:sup_var_bound_bdd} (if the counterfactual outcomes are known to fall in $(m,M)$).

\section{Inference for the bounds}\label{sup_inference}

\subsection{Inference for the closed-form bounds}

To quantify the uncertainty of our closed-form bounds, we will derive their limiting variances in the asymptotic regime where the sample sizes of the arms in the clinical trial go to infinity. 
We begin by presenting the form of these variances. 
If consistent estimates of these variances are available, then standard Wald-type confidence intervals for our closed-form bounds can be constructed. 
We will provide the form of the intervals in this case. 
Unfortunately, consistent estimates of these variances will often not be available in the setting we mainly focus on in this paper, namely the one where individual-level data are not available. 
For these cases, we derive upper bounds on the limiting variances that are based only on these summary statistics and known bounds on the outcome. 
We then describe how to use these upper bounds to construct conservative confidence intervals.

We will allow for the treatment and control arms to have different sample sizes, namely $n_1$ and $n_0$, and assume that the ratios $r_1=n_1/(n_1+n_0)$ and $r_0=n_0/(n_1+n_0)$ converge to $(0,1)$-valued constants $\rho_1$ and $\rho_0$ as the overall sample size increases. 
Let $\mu_{4,a}=E[\{Y^a-E(Y^a)\}^4]$ be the fourth population central moment of arm $a$, which we assume is finite. We also let $\bar{\Delta}$ denote the sample average treatment effect and $s_0^2$ and $s_1^2$ denote the estimates of $\mathrm{var}(Y^0)$ and $\mathrm{var}(Y^1)$, respectively. The estimate of the variance ratio $\nu^2$ is given by $\hat{\nu}^2=s_1^2/s_0^2$. 

The estimated general upper and lower closed-form bounds on ${\rm var}(\Delta)$ from Eq.~\ref{teh_bound} are given by $s_0^2(\hat{\nu}- 1)^2$ and $s_0^2(\hat{\nu}+ 1)^2$, respectively. By the delta method, these bounds satisfy the following convergence in distribution results:
\begin{align*}
\sqrt{n_1+n_0}\{s_0^2(\hat{\nu}- 1)^2 -\mathrm{var}(Y^0)(\nu- 1)^2\}&\cd N(0,\tau_{\mathrm{het}}^-), \\
\sqrt{n_1+n_0}\{s_0^2(\hat{\nu}+ 1)^2 -\mathrm{var}(Y^0)(\nu+ 1)^2\}&\cd N(0,\tau_{\mathrm{het}}^+),
\end{align*}
where
\begin{align*}
    \tau_{\mathrm{het}}^-&= \left(\frac{\mu_{4,1}}{\rho_1}-\mathrm{var}(Y^1)^2\right)(\nu^{-1}- 1)^2+\left(\frac{\mu_{4,0}}{\rho_0}-\mathrm{var}(Y^0)^2\right)(\nu- 1)^2, \\
    \tau_{\mathrm{het}}^+&= \left(\frac{\mu_{4,1}}{\rho_1}-\mathrm{var}(Y^1)^2\right)(\nu^{-1}+ 1)^2+\left(\frac{\mu_{4,0}}{\rho_0}-\mathrm{var}(Y^0)^2\right)(\nu+ 1)^2.
\end{align*}
These convergence in distribution results facilitate the construction of Wald-type confidence intervals for the upper and lower bounds on ${\rm var}(\Delta)$ from Eq.~\ref{teh_bound}. In particular, the following interval will contain these upper and lower bounds, and therefore ${\rm var}(\Delta)$, with probability tending to $1-\alpha$:
\begin{equation}\label{eq:ci}
  \left[s_0^2(\hat{\nu}- 1)^2 - z_{1-\alpha/2}\sqrt{\frac{\tau_{\mathrm{het}}^-}{n_1+n_0}} ,\, s_0^2(\hat{\nu}+ 1)^2 + z_{1-\alpha/2}\sqrt{\frac{\tau_{\mathrm{het}}^+}{n_1+n_0}}\right],
\end{equation}
where throughout, for $\beta\in (0,1)$, $z_\beta$ denotes the $\beta$-percentile of the standard normal distribution. Since the bounds in Eq.~\ref{teh_bound_bounded} are at least as tight as those in Eq.~\ref{teh_bound} when the outcome is bounded, the above is also an asymptotically valid confidence interval for the bounds in Eq.~\ref{teh_bound_bounded} in this case. When $\tau_{\mathrm{het}}^-$ and $\tau_{\mathrm{het}}^+$ are not known, an asymptotically $(1-\alpha)$-level confidence interval would still be obtained if they were replaced by consistent estimates, and an asymptotically conservative confidence interval would be obtained if they were instead replaced by asymptotic upper bounds.

In the case where no bounds on the outcome are known, the estimated upper bound on the benefit of individualizing from Eq.~\ref{rule_bound} is given by $\frac{1}{2}\sqrt{s_0^2(\hat{\nu}+1)^2 +\bar{\Delta}^2}$. 
By the delta method, it holds that
\begin{align*}
\sqrt{n_1+n_0}\left(\frac{1}{2}\sqrt{s_0^2(\hat{\nu}+1)^2 +\bar{\Delta}^2}-\frac{1}{2}\sqrt{\mathrm{var}(Y^0)(\nu+1)^2 +E(\Delta)^2}\right) \cd N\left(0,\tau_{\mathrm{ben}}\right),
\end{align*}
where
\begin{align*}
    \tau_{\mathrm{ben}} = &\frac{1}{16\{\mathrm{var}(Y^0)(\nu+1)^2 +E(\Delta)^2\}}\cdot \Bigg[\left(\frac{\mu_{4,1}}{\rho_1}-\mathrm{var}(Y^1)^2\right) (1+\nu^{-1})^2\\&+\left(\frac{\mu_{4,0}}{\rho_0}-\mathrm{var}(Y^0)^2\right)(1+\nu)^2+4\left(\frac{\mathrm{var}(Y^1)+E(Y^1)^2}{\rho_1}+\frac{\mathrm{var}(Y^0)+E(Y^0)^2}{\rho_0}-E(\Delta)^2\right)E(\Delta)^2
    \\&+4\left(\frac{E((Y^1)^3)-2E(Y^1)\mathrm{var}(Y^1)-E(Y^1)^3}{\rho_1}-E(\Delta)\mathrm{var}(Y^1)\right)E(\Delta)(1+\nu^{-1})\\
    &-4\left(\frac{E((Y^0)^3)-2E(Y^0)\mathrm{var}(Y^0)-E(Y^0)^3}{\rho_0}+E(\Delta)\mathrm{var}(Y^0)\right)E(\Delta)(1+\nu)\Bigg].
\end{align*}
An asymptotic $(1-\alpha)$-level upper-confidence bound for the right-hand side of Eq.~\ref{rule_bound} when $B^+$ is given by Eq.~\ref{teh_bound}, and an asymptotically conservative $(1-\alpha)$-level bound when $B^+$ is given by Eq.~\ref{teh_bound_bounded}, is given by
\begin{equation}\label{eq:ciUB}
 \frac{1}{2}\sqrt{s_0^2(\hat{\nu}+1)^2 +\bar{\Delta}^2} + z_{1-\alpha}\sqrt{\frac{\tau_{\mathrm{ben}}}{n_1+n_0}}.
\end{equation}
Here we are interested only in an upper confidence bound as our closed-form bound is itself an upper bound. 

Since the asymptotic variances $\tau_{\mathrm{het}}^-$, $\tau_{\mathrm{het}}^+$, and $\tau_{\mathrm{ben}}$ depend on true variances and moments of the outcome, their values will not generally be known. However, in the situation where individual-level treatment assignments and outcomes are accessible (even if covariates are not), these variances can be consistently estimated by plugging in empirical variances and moments into their definitions.

Consistent estimation of $\tau_{\mathrm{het}}^-$, $\tau_{\mathrm{het}}^+$, and $\tau_{\mathrm{ben}}$ is not generally possible when individual-level data is not available. To enable inference in these cases, we will derive upper bounds on these variances that can be consistently estimated using only commonly reported summary statistics. Inspecting the form of these variances, we see that we need to estimate or bound $\nu$, $\mathrm{var}(Y^a)$, $E(Y^a)$, and $\mu_{4,a}$ for $a\in\{0,1\}$. We can estimate $\mathrm{var}(Y^a)$ with $s_a^2$ and $\nu$ with $\hat{\nu}=s_1/s_0$. We can estimate $E(Y^a)$ with $\bar{Y}^a$, which we use to denote the empirical mean of $Y^a$. If the outcome is unbounded, we likely have no information which we could use to bound or estimate $\mu_{4,a}$. However, if the outcome is known to be bounded within $[m,M]$, we can use an existing result \cite{sharma2015complementary} to show that, for $a\in\{0,1\}$,
\begin{equation}
    \mu_{4,a}\leq \max\{[M-E(Y^a)]^2,[m-E(Y^a)]^2\}\mathrm{var}(Y^a). \label{eq:momentBd}
\end{equation}

The above can be used to show that $\tau_{\textnormal{het}}^-\le f(-1)$ and $\tau_{\textnormal{het}}^+\le f(1)$, where
\begin{align*}
    f(z)&=\left(\frac{\max\{[M-E(Y^1)]^2,[m-E(Y^1)]^2\}\mathrm{var}(Y^1)}{\rho_1}-\mathrm{var}(Y^1)^2\right)(\nu^{-1}+ z)^2 \\
    &\quad+\left(\frac{\max\{[M-E(Y^0)]^2,[m-E(Y^0)]^2\}\mathrm{var}(Y^0)}{\rho_0}-\mathrm{var}(Y^0)^2\right)(\nu+z)^2.
\end{align*}
We can consistently estimate $f(z)$ with the following, which can be evaluated using only summary statistics and the known bounds on the outcome:
\begin{align*}
    \hat{f}(z)&= 
    \left(\frac{\max\{(M-\bar{Y}^1)^2,(m-\bar{Y}^1)^2\}s_1^2}{r_1}-s_1^4\right)(\hat{\nu}^{-1}+ z)^2 \\
    &\quad+\left(\frac{\max\{(M-\bar{Y}^0)^2,(m-\bar{Y}^0)^2\}s_0^2}{r_0}-s_0^4\right)(\hat{\nu}+ z)^2,
\end{align*}
An asymptotically conservative $(1-\alpha)$-level confidence interval for $\mathrm{var}(\Delta)$ is obtained by replacing $\tau_{\mathrm{het}}^-$ and $\tau_{\mathrm{het}}^+$ in Eq.~\ref{eq:ci} by $\hat{f}(-1)$ and $\hat{f}(1)$, respectively.

Next, turning to $\tau_{\mathrm{ben}}$, the asymptotic variance of the closed-form bound on the benefit of individualizing treatment, we see that this variance involves third-moment terms. Similarly to Eq. \ref{eq:momentBd} for outcomes known to be bounded in $[m,M]$, we can bound these terms as
\begin{align}
|E[(Y^a)^3]|&\leq \max(|M|,|m|)E[(Y^a)^2] = \max(|M|,|m|)\{\mathrm{var}(Y^a)+E(Y^a)^2\}. \label{eq:thirdMomentBd}
\end{align}
Then, we can consistently estimate an upper bound on $\tau_{\mathrm{ben}}$ using the following:

\begin{align*}
    &\frac{1}{16\{s_0^2(\hat{\nu}+1)^2 +\bar{\Delta}^2\}}\cdot \Bigg[\left(\frac{\max\{(M-\bar{Y}^1)^2,(m-\bar{Y}^1)^2\}s_1^2}{r_1}-s_1^4\right) (1+\hat{\nu}^{-1})^2\\&+\left(\frac{\max\{(M-\bar{Y}^0)^2,(m-\bar{Y}^0)^2\}s_0^2}{r_0}-s_0^4\right)(1+\nu)^2+4\left(\frac{s_1^2+(\bar{Y}^1)^2}{r_1}+\frac{s_0^2+(\bar{Y}^0)^2}{r_0}-\bar{\Delta}^2\right)\bar{\Delta}^2
    \\&+4\left(\frac{\max(|M|,|m|)\{s_1^2+(\bar{Y}^1)^2\}-2\bar{Y}^1s_1^2-(\bar{Y}^1)^3}{r_1}-\bar{\Delta} s_1^2\right)\bar{\Delta}(1+\hat{\nu}^{-1})\\
    &+4\left(\frac{\max(|M|,|m|)\{s_0^2+(\bar{Y}^0)^2\}+2\bar{Y}^0 s_0^2+(\bar{Y}^0)^3}{r_0}-\bar{\Delta} s_0^2\right)\bar{\Delta}(1+\hat{\nu})\Bigg].
\end{align*}
An asymptotically conservative $(1-\alpha)$-level bound on the benefit of individualizing treatment, namely $\mu_O-\mu_T$, can be obtained by replacing $\tau_{\mathrm{ben}}$ in Eq.~\ref{eq:ciUB} by the above.

\subsection{Confidence intervals for linear programming bound on treatment benefit}

Let $(1-\alpha)\in(0,1)$ be the desired confidence level. To provide a confidence interval for the linear programming treatment benefit bounds, we can first notice that the uncertainty in the bounds comes from the uncertainty in the estimates of the optimization constraints. To address this, we can change our optimization problem so that instead of the optimal solution producing arm-specific means and second moments equal to our estimates, we require the optimal solution to produce arm-specific means and second moments within $(1-\alpha/4)$-level confidence intervals for the population-level means and second moments. Applying a union bound, these four confidence intervals will all contain the true arm-specific means and second moments with probability tending to at least $(1-\alpha)$. Hence, asymptotically there will be at least a probability of $1-\alpha$ that the optimal value found by maximizing (minimizing) the linear program presented in Eq.~\ref{lp_bound} of the main text falls below (above) that obtained by maximizing (minimizing) the linear program using these wider constraints. Taking these two bounds on the optimal values from the linear program in Eq.~\ref{lp_bound} as the extreme values of our confidence interval, we will obtain an interval that contains the bounds on the benefit of individualizing treatment with asymptotic probability at least $1-\alpha$.

We now provide the forms of the four $(1-\alpha/4)$-level confidence intervals described above. For each $a\in\{0,1\}$, the interval for $E(Y^a)$ takes the form $\bar{Y}^a \pm z_{1-\alpha/8}\cdot s_a^2/\sqrt{n_a}$. When constructing the intervals for the second moments $E((Y^a)^2)$, $a\in \{0,1\}$, we assume that $Y$ is bounded and known to fall in $[m,M]$. The $(1-\alpha/8)$-level confidence interval for $E((Y^a)^2)$ takes the form $s_a^2+(\bar{Y}^a)^2\pm z_{1-\alpha/8}\cdot \gamma_a/\sqrt{n_a}$, where
\begin{align*}
    \gamma_a = \left(\max\{(M-\bar{Y}^a)^2,(m-\bar{Y}^a)^2\}s_a^2-s_a^4 + 4|\bar{Y}^a|\max(|m|,|M|)(s_a^2+(\bar{Y}^a)^2)-8s_a^2(\bar{Y}^a)^2 - 4(\bar{Y}^a)^4\right)^{1/2}.
\end{align*}
Though we omit the details here, Eqs. \ref{eq:momentBd} and \ref{eq:thirdMomentBd} can be used to show that $\gamma_a^2$ consistently estimates an upper bound on $\mathrm{var}((Y^a)^2)$.

We now provide the form of a two-sided $(1-\alpha)$-level confidence interval $[L,U]$ that will contain the two-sided bound on the benefit of individualizing treatment that can be obtained from the linear program presented in Eq.~\ref{lp_bound}. When doing so, we assume, without loss of generality, that $E(Y^1)\ge E(Y^0)$. The upper bound $U$ is given by the optimal value of the following linear program:
\begin{align*}
&{\rm maximize }\ \ \sum_{r=1}^k\sum_{j=1}^k (y_r-y_j)p_{rj}I(y_j < y_r)\textnormal{ subject to }\\[0.5em]
\ \ &p_{rj}\ge 0\textnormal{ for all }(r,j)\in\{1,2,\ldots,k\}^2,\hspace{1em}\sum_{r=1}^k\sum_{j=1}^k p_{rj} = 1, \\
\ \ &\sum_{r=1}^k \sum_{j=1}^k y_j p_{rj} \ge \bar{Y}^1 - z_{1-\alpha/8}\cdot s_1^2/\sqrt{n_1},\hspace{1em} \sum_{r=1}^k \sum_{j=1}^k y_j p_{rj} \le \bar{Y}^1 + z_{1-\alpha/8}\cdot s_1^2/\sqrt{n_1},\\
\ \ &\sum_{r=1}^k\sum_{j=1}^k y_r p_{rj}  \ge \bar{Y}^0 - z_{1-\alpha/8}\cdot s_0^2/\sqrt{n_0},\hspace{1em} \sum_{r=1}^k\sum_{j=1}^k y_r p_{rj}  \le \bar{Y}^0 + z_{1-\alpha/8}\cdot s_0^2/\sqrt{n_0}, \\
&\sum_{r=1}^k \sum_{j=1}^k y_j^2 p_{rj} \ge s_1^2+(\bar{Y}^1)^2- z_{1-\alpha/8}\cdot \gamma_1/\sqrt{n_1},\hspace{1em} \sum_{r=1}^k \sum_{j=1}^k y_j^2 p_{rj} \le s_1^2+(\bar{Y}^1)^2+ z_{1-\alpha/8}\cdot \gamma_1/\sqrt{n_1}, \\
&\sum_{r=1}^k\sum_{j=1}^k y_r^2 p_{rj} \ge s_0^2 + (\bar{Y}^0)^2 - z_{1-\alpha/8}\cdot \gamma_0/\sqrt{n_0},\hspace{1em} \sum_{r=1}^k\sum_{j=1}^k y_r^2 p_{rj} \le s_0^2 + (\bar{Y}^0)^2 + z_{1-\alpha/8}\cdot \gamma_0/\sqrt{n_0}.
\end{align*}
The lower bound $L$ is given by the optimal value of the same linear program but with the objective being minimized instead of maximized.

\section{Details of optimal rule benefit estimation in the EMBARC example}\label{sup_data_example}

To estimate the benefit of the optimal rule for the EMBARC data, we used all numerical and categorical covariates. For rating scale covariates, we used only total and subtotal variables and excluding variables for individual questions. We created one dataset including all variables with no more than 25\% missingness, and one dataset including all variables with no more than 50\% missingness. The variables used in the analysis with no more than 25\% missingness are listed in Table \ref{tab_25}. The additional variables used in the analysis with no more than 50\% missingness are listed in Table \ref{tab_50}. Missing covariates were singly imputed using the \texttt{MICE} package in R version 4.0.5 \cite{mice, r} with the package's default imputation methods.

To estimate the benefit of the optimal treatment rule, we used cross validated targeted maximum likelihood estimation and \texttt{superlearner} as implemented in the \texttt{sg} package in R \cite{luedtke2016super,sg,sl}. Within \texttt{superlearner}, we used the \texttt{polymars}, \texttt{ranger}, \texttt{nnet}, stepwise regression, and stepwise regression with interactions \cite{polymars,ranger,nnet}. For all methods, we initially screened variables using a random forest \cite{randomforest}. We excluded all observations with missing outcome measurements to match the summary statistics, which only included complete cases. We used the sg.cvtlme function with default settings.

Using the dataset with no more than 25\% missingness, we estimated a 95\% confidence interval of (0, 0.09) for the benefit. Using the dataset with no more than 50\% missingness, our estimated confidence interval was (0, 0.08).

\begin{table}
\begin{small}
\caption{Covariates used in the analysis with no more than 25\% missingness}
\label{tab_25}
\begin{tabular}{|l|l|l|}
\hline
 \textbf{Variable} & \textbf{Form} & \textbf{Description}\\
\hline
aaq\_1 & Anger Attack Questionnaire & Past 6 months, have felt irritable or easily angered\\
\hline
aaq\_2 & Anger Attack Questionnaire & Past 6 months feel have overreacted with anger or rage to \\
&  & \hspace{2 mm} minor annoyances or trivial issues\\
\hline
aaq\_3 & Anger Attack Questionnaire & Past 6 months, had "anger attacks", where would become\\
&  & \hspace{2 mm} angry \& enraged with others in way thought was excessive\\
\hline
aaq\_score\_result & Anger Attack Questionnaire & presence of anger attack\\
\hline
asrm\_score2 & Altman Self-Rating Mania Scale & ASRM score\\
\hline
atrq\_01 & Antidepressant Trt. Response Q & Received medication treatment since beginning of current \\
&  & \hspace{2 mm} episode or last 2 years if current episode lasted over 2 yrs\\
\hline
cast\_score\_total & Concise Assoc. Symptoms Scale & Total Score\\
\hline
cgi\_si & Clinical Global Impression & Severity of Illness. Considering total clinical experience with\\
 &  & \hspace{2 mm} this particular population how ill is subject at this time\\
\hline
chrtp\_propensity\_score & Concise Health Risk Tracking & Propensity score\\
\hline
chrtp\_risk\_score & Concise Health Risk Tracking & Risk score\\
\hline
ctqscore\_ea & Childhood Trauma Q & Emotional Abuse\\
\hline
ctqscore\_en & Childhood Trauma Q & Emotional Neglect\\
\hline
ctqscore\_pa & Childhood Trauma Q & Physical Abuse\\
\hline
ctqscore\_pn & Childhood Trauma Q & Physical Neglect\\
\hline
ctqscore\_sa & Childhood Trauma Q & Sexual Abuse\\
\hline
ctqscore\_val & Childhood Trauma Q & Validity\\
\hline
ehiscore & Edinburgh Handedness Invent. & EHI Score\\
\hline
interview\_age & Demographics & Age in months.\\
\hline
sex & Demographics & Sex of subject at birth\\
\hline
demo\_educa\_status & Demographics & Education\\
\hline
edutot & Demographics & Total years of education\\
\hline
employst & Demographics & Current employment status\\
\hline
thous & Demographics & Total number of persons in household\\
\hline
demo\_maritial\_status & Demographics & Marital Status:\\
\hline
totincom & Demographics & Monthly household income\\
\hline
demo\_ssn\_status & Demographics & Social Security Number Status (not the number)\\
\hline
hispanic & Demographics & Is subject of Hispanic, Latino, or Spanish origin?\\
\hline
race & Demographics & Race of study subject\\
\hline
mhf\_23 & Medical History Form & Checks if physical exam was administered\\
\hline
completed & Medical History Form & Checks if completed\\
\hline
fhs\_01\_kids & Family History Screen Modified & Number biological children know about\\
\hline
fhs\_01\_parent & Family History Screen Modified & Number biological parents know about\\
\hline
fhs\_01\_sibs & Family History Screen Modified & Number biological brothers/sisters know about\\
\hline
fhs\_02 & Family History Screen Modified & Any biological children/parents/siblings ever had serious \\
 &  & \hspace{2 mm} mental illness, emotional problem, or  nervous breakdown\\
\hline
fhs\_03 & Family History Screen Modified & Any biological children/parents/siblings in past 2 weeks feel \\
 &  & \hspace{2 mm} sad or blue\\
\hline
fhs\_04 & Family History Screen Modified & Any biological children/parents/siblings ever had period of \\
 &  & \hspace{2 mm} feeling extremely happy or high\\
\hline
fhs\_05 & Family History Screen Modified & Any biological children/parents/siblings ever heard voices, \\
 &  & \hspace{2 mm} or seen visions, that other people could not see or hear\\
\hline
fhs\_06 & Family History Screen Modified & Any biological children/parents/siblings ever tried to kill \\
 &  & \hspace{2 mm} him or herself, or made a suicide attempt\\
\hline
fhs\_07 & Family History Screen Modified & Any biological children/parents/siblings commit suicide\\
\hline
hamd\_36 & Hamilton Depression Scale (32) & HAMD Total 17 Item Score\\
\hline
hamd\_score\_24 & Hamilton Depression Scale (32) & HAMD Total 24 Item Score\\
\hline
\end{tabular}
\end{small}
\end{table}

\begin{table}
\begin{small}
\begin{tabular}{|l|l|l|}
\hline
 \textbf{Variable} & \textbf{Form} & \textbf{Description}\\
\hline
masq2\_score\_aa & Mood and Anxiety Symptoms Q & Anxious Arousal\\
\hline
masq2\_score\_ad & Mood and Anxiety Symptoms Q & Anhedonic Depression\\
\hline
masq2\_score\_gd & Mood and Anxiety Symptoms Q & General Distress Anxious Symptoms\\
\hline
mdqscore\_total & Mood Disorders Q & sums the first thirteen q's\\
\hline
sfi\_total & MGH Sexual Function Inventory & SFI total score\\
\hline
neo\_n & NEO-Five Factor Inventory & Neuroticism raw score\\
\hline
neo\_e & NEO-Five Factor Inventory & Extraversion raw score\\
\hline
neo\_o & NEO-Five Factor Inventory & Openness to Experience raw score\\
\hline
neo\_a & NEO-Five Factor Inventory & Agreeableness raw score\\
\hline
neo\_c & NEO-Five Factor Inventory & Conscientiousness raw score\\
\hline
sapas\_score & SAP Abbrev. Scale & SAPAS total\\
\hline
q001\_bpi\_life & SCID I/p Summary & Bipolar I Lifetime Prevalence\\
\hline
q010\_bpii\_life & SCID I/p Summary & Bipolar II Lifetime Prevalence\\
\hline
q018\_obp\_life & SCID I/p Summary & Other Bipolar Lifetime Prevalence\\
\hline
q021\_mdd\_life & SCID I/p Summary & Major Depressive Disorder Lifetime Prevalence\\
\hline
q022\_mdd\_ep & SCID I/p Summary & Major Depressive Disorder Episode Pattern\\
\hline
q026\_mdd\_current & SCID I/p Summary & Major Depressive Disorder Current Episode\\
\hline
q027\_mdd\_sev & SCID I/p Summary & Major Depressive Disorder Current Severity\\
\hline
q065\_pdnos\_life & SCID I/p Summary & Psychotic Disorder NOS Lifetime Prevalence\\
\hline
q067\_al\_life & SCID I/p Summary & Alcohol Lifetime Prevalence\\
\hline
q069\_sha\_life & SCID I/p Summary & Sedative-Hypnotic-Anxiolytic Lifetime Prevalence\\
\hline
q071\_can\_life & SCID I/p Summary & Cannabis Lifetime Prevalence\\
\hline
q073\_stim\_life & SCID I/p Summary & Stimulants Lifetime Prevalence\\
\hline
q075\_op\_life & SCID I/p Summary & Opiod Lifetime Prevalence\\
\hline
q077\_coc\_life & SCID I/p Summary & Cocaine Lifetime Prevalence\\
\hline
q079\_hal\_life & SCID I/p Summary & Hallucinogen/PCP Lifetime Prevalence\\
\hline
q081\_poly\_life & SCID I/p Summary & Polysubstance Lifetime Prevalence\\
\hline
q083\_othsub\_life & SCID I/p Summary & Other Substance Lifetime Prevalence\\
\hline
q085\_panic\_life & SCID I/p Summary & Panic Disorder Lifetime Prevalence\\
\hline
q088\_agor\_life & SCID I/p Summary & Agoraphobia without Panic Disorder Lifetime Prevalence\\
\hline
q090\_social\_life & SCID I/p Summary & Social Phobia Lifetime Prevalence\\
\hline
q092\_phobia\_life & SCID I/p Summary & Specific Phobia Lifetime Prevalence\\
\hline
q094\_ocd\_life & SCID I/p Summary & Obsessive Compulsive Disorder Lifetime Prevalence\\
\hline
q096\_ptsd\_life & SCID I/p Summary & Posttraumatic Stress Disorder Lifetime Prevalence\\
\hline
q103\_siad\_life & SCID I/p Summary & Substance-Induced Anxiety Disorder Lifetime Prevalence\\
\hline
q107\_adnos\_life & SCID I/p Summary & Anxiety Disorder NOS Lifetime Prevalence\\
\hline
q114\_an\_life & SCID I/p Summary & Annorexia Nervosa Lifetime Prevalence\\
\hline
q116\_bn\_life & SCID I/p Summary & Bulimia Nervosa Lifetime Prevalence\\
\hline
q121\_othdx\_life & SCID I/p Summary & Other DSM IV Disorder Lifetime Prevalence\\
\hline
gadc & SCID I/p Summary & Generalized Anxiety Disorder current\\
\hline
mdca4 & SCID I/p Summary & Insomnia or hypersomnia\\
\hline
mdca6 & SCID I/p Summary & Fatigue or loss of energy\\
\hline
mdca7 & SCID I/p Summary & Feelings of worthlessness or inappropriate guilt\\
\hline
mdca8 & SCID I/p Summary & Trouble thinking or concentrating\\
\hline
mdca9 & SCID I/p Summary & Thoughts or plans of death\\
\hline
mdcnumep & SCID I/p Summary & Total episodes of major depressive syndrome, including current\\
\hline
mdcageon & SCID I/p Summary & Age at onset of first unequivocal major depressive syndrome\\
\hline
dsm4oma1 & SCID I/p Summary & Prominent and persistent depressed mood or diminished interest\\
\hline
dsm4mb5 & SCID I/p Summary & Significant anorexia or weight loss\\
\hline
dsm4mb6 & SCID I/p Summary & Excessive or inappropriate guilt\\
\hline
scid\_insomnia & SCID I/p Summary & Insomnia\\
\hline
\end{tabular}
\end{small}
\end{table}

\begin{table}
\begin{small}
\begin{tabular}{|l|l|l|}
\hline
 \textbf{Variable} & \textbf{Form} & \textbf{Description}\\
\hline
psyret & SCID I/p Summary & Psychomotor retardation\\
\hline
pla5 & SCID I/p Summary & Major Depressive Episode - Psychomotor agitation\\
\hline
worthled & SCID I/p Summary & Worthlessness\\
\hline
pla8 & SCID I/p Summary & MD Episode - Diminished ability to think\\
\hline
indecide & SCID I/p Summary & MD Episode - Indecisiveness\\
\hline
suicidea & SCID I/p Summary & MD Episode - Suicidal Ideation\\
\hline
specplan & SCID I/p Summary & MD Episode - Specific plan to commit suicide\\
\hline
owndeath & SCID I/p Summary & MD Episode - Thoughts of own death\\
\hline
adsq12\_u & SCID I/p Summary & Presence of Symptom: Weight gain (Non-imputed)\\
\hline
adsq10\_u & SCID I/p Summary & Presence of Symptom: Weight loss (Non-imputed)\\
\hline
ca601 & SCID I/p Summary & Past psychosis - MDD: Hypersomnia - depressive ep.\\
\hline
scid\_ad1 & SCID I/p Summary & Age at on set of first dysphoria:\\
\hline
scid\_ad3 & SCID I/p Summary & Duration of current MD episode (months)\\
\hline
scid\_ad5 & SCID I/p Summary & Longest period of w/out dysphoria\\
\hline
scid\_ad\_dm1 & SCID I/p Summary & Full interepisodic recovery\\
\hline
scid\_ad6 & SCID I/p Summary & Rate over all course of dysphoria\\
\hline
scid\_ad7 & SCID I/p Summary & Addendum - depressed mood most of the day\\
\hline
scid\_ad11 & SCID I/p Summary & Addendum - pyschomotor or retardation\\
\hline
scid\_ad16 & SCID I/p Summary & Scores past 9 qs: 5 must be 3 and one of 5 is 7 or 8\\
\hline
scidad\_03\_dx & SCID I/p Summary & Current MDD episode Anxious Distress\\
\hline
scidad\_03\_dx1 & SCID I/p Summary & Current MDD episode Mixed Features\\
\hline
shaps\_total\_continuous & S-H Pleasure Scale & sums all scores ordinal\\
\hline
sas\_overall\_mean & Social Adjustment Scale Short & Overall mean\\
\hline
stai\_eeg\_final\_score & Speilberger State Anxiety Invent. & EEG STAI Final Score\\
\hline
strf\_01 & Screening Test Results & Checks for positive drug test\\
\hline
strf\_09 & Screening Test Results & Checks for other abnormalities\\
\hline
weight\_std & Screening Test Results & Weight - Standard Unit\\
\hline
height\_std & Screening Test Results & Height - Standard Unit\\
\hline
bmi & Screening Test Results & body mass index of subject\\
\hline
strf\_cholesterol\_hdl & Screening Test Results & Cholesterol HDL\\
\hline
strf\_cholesterol\_ldl & Screening Test Results & Cholesterol LDL\\
\hline
strf\_cholesterol\_total & Screening Test Results & Cholesterol total\\
\hline
strf\_cholesterol\_triglycerides & Screening Test Results & Cholesterol triglycerides\\
\hline
strf\_crp & Screening Test Results & CRP\\
\hline
strf\_fasting & Screening Test Results & fasting y/n\\
\hline
gluval & Screening Test Results & glucose value\\
\hline
tsh & Screening Test Results & Thyroid-stimulating Hormone (TSH)\\
\hline
strf\_waist & Screening Test Results & waist circumference\\
\hline
vas1 & Visual Analog Mood Scales & Friendly (P). VAMS Friendly - Hostile\\
\hline
vams\_hap1 & Visual Analog Mood Scales & Happy (P). VAMS Happy - Sad\\
\hline
vams\_wit1 & Visual Analog Mood Scales & Quick Witted (P). VAMS Quick Witted\\
\hline
vams\_rel1 & Visual Analog Mood Scales & Relaxed (P). VAMS Relaxed - Tense\\
\hline
vams\_soc1 & Visual Analog Mood Scales & Sociable (P).VAMS Sociable\\
\hline
ss\_vocabularyrawscore & Wechsler Abbrev. Scale of Intell. & Vocabulary Raw - Score\\
\hline
ss\_matrixreasoningrawscore & Wechsler Abbrev. Scale of Intell. & Matrix Reasoning Raw - Score\\
\hline
wasi001b & Wechsler Abbrev. Scale of Intell. & Vocabulary: T Score\\
\hline
wasi004b & Wechsler Abbrev. Scale of Intell. & Matrix Reasoning: T Score\\
\hline
\end{tabular}
\end{small}
\end{table}

\begin{table}
\caption{Additional covariates used in the analysis with no more than 50\% missingness}
\label{tab_50}
\begin{tabular}{|l|l|l|}
\hline
 \textbf{Variable} & \textbf{Form} & \textbf{Description}\\
\hline
mhf\_02 & Medical History Form & Checks if condition is ongoing\\
\hline
mhf\_04 & Medical History Form & Checks if condition is ongoing - respiratory\\
\hline
mhf\_06 & Medical History Form & Checks if condition is ongoing - circulatory\\
\hline
mhf\_08 & Medical History Form & Checks if condition is ongoing - digestive\\
\hline
mhf\_10 & Medical History Form & Checks if condition is ongoing - skin\\
\hline
mhf\_12 & Medical History Form & Checks if condition is ongoing - urinary\\
\hline
mhf\_14 & Medical History Form & Checks if condition is ongoing - musculoskeletal\\
\hline
mhf\_16 & Medical History Form & Checks if condition is ongoing - nervous\\
\hline
mhf\_18 & Medical History Form & Checks if condition is ongoing - endocrine\\
\hline
mhf\_20 & Medical History Form & Checks if condition is ongoing - blood organs\\
\hline
mhf\_22 & Medical History Form & Checks if condition is ongoing - other III\\
\hline
fhs\_02\_kids & Family History Screen Modified & number of kids with serious mental illness\\
\hline
fhs\_02\_par & Family History Screen Modified & number parents with serious mental illness\\
\hline
fhs\_02\_sibs & Family History Screen Modified & number of siblings with serious mental illness\\
\hline
fhs\_03\_kids & Family History Screen Modified & Number of children feeling sad or blue\\
\hline
fhs\_03\_par & Family History Screen Modified & Number of parents feeling sad or blue\\
\hline
fhs\_03\_sibs & Family History Screen Modified & Number of siblings feeling sad or blue\\
\hline
q025\_mdd\_mon & SCID I/p Summary & Major Depressive Disorder Past Month\\
\hline
scidad\_02\_dx3 & SCID I/p Summary & Current MDD episode type\\
\hline
pregnan & Screening Test Results & pregnancy test\\
\hline
strf\_15 & Screening Test Results & Initial pregnancy test date\\
\hline
\end{tabular}

\end{table}

\end{document}